\title{Computing a Subtrajectory Cluster from $c$-packed Trajectories}
\author{Joachim Gudmundsson}{The University of Sydney}{joachim.gudmundsson@sydney.edu.au}{https://orcid.org/0000-0002-6778-7990
}{Funded by the Australian Government through the Australian Research Council DP180102870.}
\author{Zijin Huang}{The University of Sydney}{zijin.huang@uni.sydney.edu.au}{https://orcid.org/0000-0003-3417-5303}{}
\author{André van Renssen}{The University of Sydney}{andre.vanrenssen@sydney.edu.au}{https://orcid.org/0000-0002-9294-9947}{}
\date{October 2022}
\author{Sampson Wong}{The University of Sydney}{swon7907@sydney.edu.au}{https://orcid.org/0000-0003-3803-3804}{}
\authorrunning{J.\, Gudmundsson, Z.\, Huang, A.\, van Renssen and S.\, Wong}
\keywords{Subtrajectory cluster, $c$-packed trajectories, Computational geometry}
\newtheorem{problem}{Problem}
\newtheorem{subproblem}[problem]{Subproblem}
\newtheorem{fact}[theorem]{Fact}
\theoremstyle{remark}
\newcommand{\abse}[1]{\lVert #1 \rVert}
\newcommand{\abs}[1]{|#1|}
\newcommand{\ed}{\varepsilon d}
\newcommand{\e}{\varepsilon}
\newcommand{\ehat}{\hat{\varepsilon}}
\def\NAT@spacechar{~}
\newcommand{\Frechet}{Fréchet }
\begin{document}
\maketitle

\newcommand{\finalComplexity}{$O((c^2 n/\varepsilon^2)\log(c/\varepsilon)\log(n/\varepsilon))$ }
\newcommand{\constructFdComplexity}{(cn/\varepsilon) \log{(cn/\varepsilon)} }

\begin{abstract}
We present a near-linear time approximation algorithm for the subtrajectory cluster problem of $c$-packed trajectories. The problem involves finding $m$ subtrajectories within a given trajectory $T$ such that their Fréchet distances are at most $(1 + \varepsilon)d$, and at least one subtrajectory must be of length~$l$ or longer. A trajectory $T$ is $c$-packed if the intersection of $T$ and any ball $B$ with radius $r$ is at most $c \cdot r$ in length.

Previous results by Gudmundsson and Wong \cite{GudmundssonWong2022Cubicupperlower} established an $\Omega(n^3)$ lower bound unless the Strong Exponential Time Hypothesis fails, and they presented an $O(n^3 \log^2 n)$ time algorithm. We circumvent this conditional lower bound by studying subtrajectory cluster on $c$-packed trajectories, resulting in an algorithm with an \finalComplexity time complexity.
\end{abstract}

\section{Introduction}
With the proliferation of location-aware devices comes an abundance of trajectory data. One way to process and make sense of many trajectories is to group long and similar subtrajectories. The analysis of long and similar parts of trajectories can provide insights into behavior and mobility patterns, such as common routes taken and places visited frequently. 

Buchin et al. \cite{BuchinEtAl2011Detectingcommutingpatterns} initialised the study of subtrajectory cluster problems to detect and extract common movement patterns. The Subtrajectory Cluster (SC) decision problem is defined as follows. Given one or more trajectories, determine if there exists a cluster of $m - 1$ non-overlapping subtrajectories and one reference trajectory. The reference trajectory~$T_r$ must be at least of length $l$, and the Fréchet distances between $T_r$ and the other $m - 1$ subtrajectories must be at most $d$. In the case of animals, long and common movement patterns can indicate movement between grazing spots of sheep or the migration flyway of seabirds. In the case of humans, common movement on a Monday morning can show commuting patterns to find the most heavily congested areas. 

Subtrajectory clustering has attracted research from multiple communities. Gudmundsson and Wolle~\cite{gudmundssonFootballAnalysisUsing2012} used subtrajectory cluster to analyse the common movement patterns of football players. Buchin et al.~\cite{buchinClusteringTrajectoriesMap2017} applied subtrajectory cluster to map reconstruction by clustering common movement patterns of vehicles into road segments. Researchers in the Geographical Information and Data Mining communities also considered the variants and practical performance of subtrajectory cluster algorithms \cite{changMultigranularityVisualizationTrajectory2009, tampakisScalableDistributedSubtrajectory2019, gudmundssonMotifsGoalsMining2012, gudmundssonGPUApproachSubtrajectory2015, leeTrajectoryClusteringPartitionandgroup2007, agarwalSubtrajectoryClusteringModels2018, gudmundssonPracticalNearestSubTrajectory2021}. In addition, the potential of SC is examined in a wide range of applications, including sports player analysis~\cite{wangSimilarSportsPlay2021} and human movement analysis~\cite{cavallaroMeasuringImpactCOVID192021, hosseinpoormilaghardanActivitybasedFrameworkDetecting2021}. 

Several theoretical studies of the subtrajectory clustering problem focus on improving the quality of clustering. Agarwal et al.~\cite{agarwalSubtrajectoryClusteringModels2018} defined a single objective function, the weighted sum of three quality measures of a clustering. These quality measures include the number of clusters chosen, the quality of the cluster, and the size of the trajectories excluded from the clustering. Brüning et al.~\cite{bruningFasterApproximateCovering2022} studied so-called $\triangle$-coverage, aiming to find a set $C$ of curves to cover a polygonal curve such that a curve in $C$ is fixed in size, and $\abs{C}$ is minimised. 

However, despite considerable attention from multiple communities, there is no subcubic time algorithm that solves the subtrajectory cluster problem, limiting its usefulness on large data sets. Buchin et al.~\cite{BuchinEtAl2011Detectingcommutingpatterns} solved the subtrajectory cluster problem in $O(n^5)$ time when the similarity measurement of two trajectories is the Fréchet distance, and Gudmundsson and Wong~\cite{GudmundssonWong2022Cubicupperlower} further improved the runtime with an $O(n^3 \log^2n)$ time algorithm. In addition, Gudmundsson and Wong~\cite{GudmundssonWong2022Cubicupperlower} showed that there is no $O(n^{3-\delta})$ algorithm for subtrajectory cluster for any $\delta > 0$ unless the Strong Exponential Time Hypothesis (SETH) fails. 

SC is unlikely to have a strongly subquadratic algorithm even if we allow a small approximation factor on the Fr\'echet distances between subtrajectories. Because given two trajectories $T_1$ and $T_2$, we can structure the SC problem to find two subtrajectories such that their Fréchet distance is at most $(1 + \varepsilon)d$, and the reference trajectory must be as long as the maximum of $T_1$ and $T_2$. Solving this instance of SC is equivalent to approximating the Fréchet distance of $T_1$ and $T_2$, and Bringmann~\cite{bringmannWhyWalkingDog2014} showed that there is no $1.001$-approximation with runtime $O(n^{2 − \delta})$ for the continuous Fréchet distance for any $\delta > 0$, unless SETH fails.

Since an exact subcubic and an approximate subquadratic algorithm are unlikely to exist, we study subtrajectory cluster on a realistic family of trajectories, called $c$-packed trajectories. A trajectory $T$ is $c$-packed if, for any ball $B$ of radius $r$, the length of $T$ lying inside $B$ is at most $c$ times $r$. The packedness value of a trajectory $T$ is the maximum $c$ for which $T$ is $c$-packed. Bringmann~\cite{bringmannWhyWalkingDog2014} proved that computing the Fréchet distance has no strong subquadratic algorithm unless SETH fails, and the notion of $c$-packedness was introduced by Driemel et al.~\cite{DriemelEtAl2012ApproximatingFrechetDistance} to circumvent such conditional lowerbound. Since then, the notion of $c$-packedness has gained considerable attention from the theory community~\cite{gudmundssonFrechetQueriesGeometric2013, bringmannWhyWalkingDog2014, aghamolaeiWindowingQueriesUsing2021, chenApproximateMapMatching2011, bringmannImprovedApproximationFrechet2017}, and several real-world data sets have been shown to have low packedness values~\cite{gudmundssonApproximatingPackednessPolygonal2023, DriemelEtAl2012ApproximatingFrechetDistance}. In one particular instance,  \mbox{Gudmundsson et al.~\cite{gudmundssonApproximatingPackednessPolygonal2023}} approximated the packedness values of several real-world trajectory data sets. In their experiments, several trajectory data sets  have low packedness values, such as the movement patterns of people in Beijing, school buses, European football players, and trawling bats. 

\sloppy In this paper, given a $c$-packed trajectory $T$ of complexity $n$ and a desired multiplicative approximation error $\varepsilon$ on the Fréchet distance between subtrajectories, we present an \finalComplexity time algorithm that solves the SC problem. It is worth noting that previous papers considering $c$-packed curves typically replace a factor $n$ with a polynomial of constant degree in $c$ \cite{conradiComputingKShortcutFrechet2022, gudmundssonMapMatchingQueries2023}. We are able to replace a factor of $n^2$ with $c^2/\varepsilon^2$, bringing the algorithm's running time from cubic to near-linear, assuming $c \in O(1)$.

Along the way, we develop a tool for simplifying the free space diagram that may be of independent interest. To efficiently approximate the \Frechet distance, Driemel et al.~\cite{DriemelEtAl2012ApproximatingFrechetDistance} showed that the free space complexity, i.e., the number of non-empty cells, is $O(cn/\e)$ for two simplified $c$-packed trajectories (see Section~\ref{sec:overview} for an overview of the free space, or \cite{altComputingFrechetDistance1995} for a formal definition). However, simplifying a trajectory by taking shortcuts between vertices can yield a much shorter trajectory, and the SC problem is sensitive to the length of the trajectories since the reference trajectory has to have a length at least $l$. To tackle this problem, we developed a tool to construct the free space diagram in~$O(\constructFdComplexity)$ time, preserving the length of two trajectories while benefiting from the $O(cn/\e)$ free space complexity. Our tool can be of value for problems in which the length of a trajectory is important, such as subtrajectory cluster \cite{BuchinEtAl2011Detectingcommutingpatterns}, partial curve matching \cite{buchinExactAlgorithmsPartial2009}, and \Frechet distance with speed limit \cite{maheshwariFrechetDistanceSpeed2011}.

In Section~\ref{sec:overview}, we will formally define the subtrajectory cluster problem and outline the greedy plane sweep algorithms by Buchin et al.~\cite{BuchinEtAl2011Detectingcommutingpatterns} and Gudmundsson and Wong~\cite{GudmundssonWong2022Cubicupperlower}, which our approach builds on. In Section~\ref{sec:technical_overview}, we provide a technical overview of our main results. 
In Section~\ref{sec:free_space_diagram}, we will discuss how to simplify the free space diagram to achieve a lower complexity while preserving trajectory lengths.

In Section~\ref{sec:vertex_to_vertex}, we will consider the restricted case when the reference trajectory must be vertex-to-vertex. In Section~\ref{sec:arbitrary_ref_trajectory}, we will remove this restriction by considering arbitrary reference trajectory.

\section{Preliminaries}
\label{sec:overview}
In this section, we will outline the previous algorithms for the subtrajectory cluster problem. The subtrajectory cluster problem was first introduced by Buchin et al.~\cite{BuchinEtAl2011Detectingcommutingpatterns}, and later improved by Gudmundsson and Wong~\cite{GudmundssonWong2022Cubicupperlower}. But instead of looking for a subtrajectory where the \Frechet distances between the reference trajectory and the subtrajectories are exact, we aim to find a solution that approximates the \Frechet distance between subtrajectories in the cluster. 

\begin{problem}[{{\cite{GudmundssonWong2022Cubicupperlower}}}]
Given trajectory $T$ of complexity $n$, a positive integer $m$, positive real numbers $d$, $l$ and $\e$, decide if there exists a subtrajectory cluster of $T$ such that:
\begin{itemize}
    \item the cluster consists of one reference subtrajectory and $m - 1$ other subtrajectories of $T$, 
    \item the reference subtrajectory has Euclidean length at least $l$, 
    \item the Fréchet distance between the reference subtrajectory and any other subtrajectory is at most $(1 + \e)d$, 
    \item any pair of subtrajectories in the cluster overlap in at most one point.
\end{itemize}
\end{problem}

Buchin et al.~\cite{BuchinEtAl2011Detectingcommutingpatterns} solved the exact SC problem by using a plane sweep algorithm on the free space diagram $\mathcal{F}_d(T, T)$. Let $s$ and $t$ be two points on $T$, and we denote $T_{st}$ as the subtrajectory of $T$ starting from $s$ and ending on $t$. Let $l_s$ and $l_t$ be the vertical sweep lines $x = s$ and $x = t$ on $\mathcal{F}_d(T, T)$, respectively (see Figure~\ref{fig:free_space_and_paths}). An $xy$-monotone path in $\mathcal{F}_d(T, T)$, or monotone path for short, is a continuous path that is non-decreasing in both $x$- and $y$-coordinates. To solve $\texttt{SC(m, d, l)}$, the lines $l_s$ and $l_t$ sweep from left to right while making sure that $l_s$ is to the left of $l_t$, and the reference trajectory $T_{st}$ is at least $l$ long, i.e., $t - s \geq l$. In each interval $[l_s, l_t]$, they compute the maximum number of monotone paths in $\mathcal{F}_d(T, T)$ starting at $l_s$ and ending at $l_t$. 

\begin{figure}[bth]
    \centering
    \includegraphics[scale=0.9]{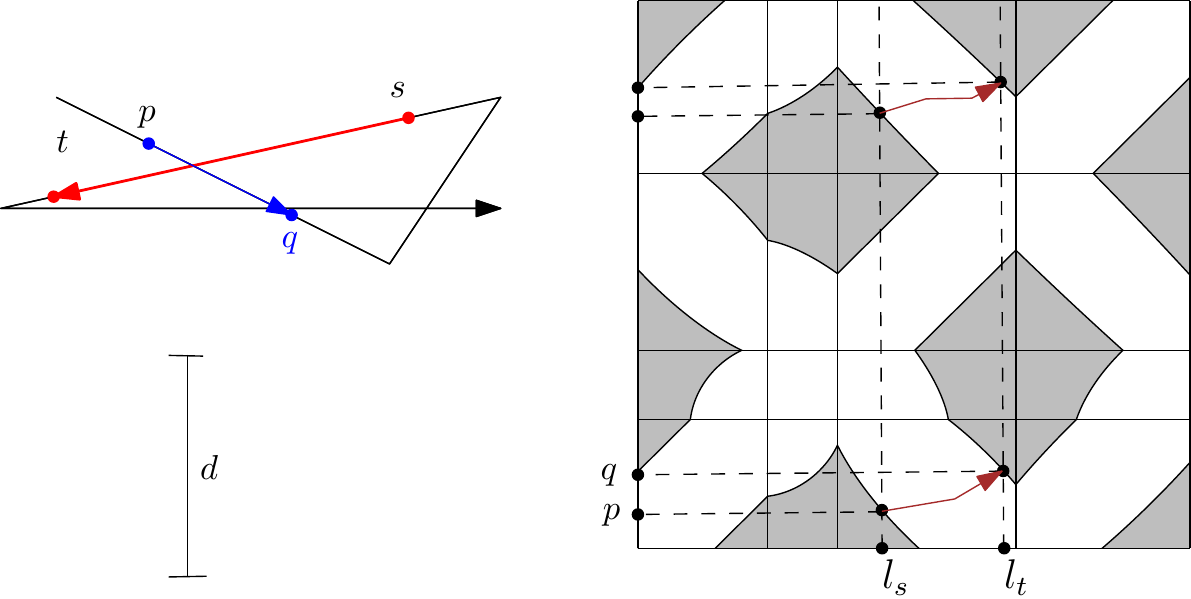}
    \caption{The free space diagram $\mathcal{F}_d(T, T)$, and interval $[l_s, l_t]$ defined by two points $s$ and $t$ on~$T$. If there exists a monotone path (marked in brown) from $(s, p)$ to $(t, q)$ through the free space, then the Fréchet distance between subtrajectories $T_{st}$ and $T_{pq}$ is at most $d$. }
    \label{fig:free_space_and_paths}
\end{figure}

Let $p$ and $q$ be two points on $T$, and let $(s, p)$ and $(t, q)$ be two coordinates on $\mathcal{F}_d(T, T)$. As we only consider monotone paths starting from $l_s$ and ending on $l_t$, we call the monotone path from $(s, p)$ to $(t, q)$ the $pq$ monotone path. First, a monotone path $pq$ must traverse only the free space. Second, two monotone paths $pq$ and $ab$ must not overlap along the $y$-interval in more than a single point. Third, the $y$-coordinates of any $pq$ monotone path cannot overlap the $[s, t]$ interval in more than a single point. We obtain the following subproblem.

\begin{subproblem}[\cite{GudmundssonWong2022Cubicupperlower}]
\label{problem:2}
Given a trajectory $T$ of complexity $n$, a positive integer $m$, a positive real value $d$, and a reference subtrajectory of $T$ starting at $s$ and ending at $t$, let $l_s$ and $l_t$ be two vertical lines in $\mathcal{F}_d(T, T)$ representing the points $s$ and $t$. Decide if there exist:
\begin{itemize}
    \item $m - 1$ distinct paths starting at $l_s$ and ending at $l_t$, such that 
    \item the $y$-coordinate of any two monotone paths overlap in at most one point, and
    \item the $y$-coordinate of any monotone path overlaps the $y$-interval from $s$ to $t$ in at most one point.
\end{itemize}
\end{subproblem}

To look for a set $\{p_1q_1, p_2q_2, ..., p_{m - 1}q_{m - 1}\}$ of monotone paths, both algorithms use a greedy approach. First, set $p_1$ to be the lowest feasible point on $l_s$, and compute $p_1q_1$ by searching for a lowest monotone path through the free space. Inductively, with $p_{i - 1}q_{i - 1}$ computed, set $p_{i}$ to the lowest feasible point on $l_s$ that is on or above $q_{i - 1}$, and do the same. If a search from $p_i$ leads to a dead end, we simply set $p_i$ to the next lowest feasible point on $l_s$, and search again.

The sweeplines stop at all $O(n^3)$ critical points, and for each critical point there is a $[l_s, l_t]$ interval to consider. Buchin et al.~\cite{BuchinEtAl2011Detectingcommutingpatterns} solved each instance in $O(nm) \subseteq O(n^2)$ time. Gudmundsson and Wong~\cite{GudmundssonWong2022Cubicupperlower} improved the efficiency by connecting the critical points efficiently in a tree-like data structure which allows them to reuse computed monotone paths from previous interval instances. They showed that, in their construction, there are at most $O(n^3 \log n)$ edges, and each edge takes at most $O(\log n)$ time to add, remove, or access. This brings down the complexity of the algorithm from $O(n^5)$ to $O(n^3 \log^2 n)$ time.

\section{Technical Overview}
\label{sec:technical_overview}

Our technical overview is divided into three parts. In Sections~\ref{sec:techview:free_space_diagram_construction}, ~\ref{sec:techview:vertex_to_vertex}, and~\ref{sec:techview:arbitrary_reference}, we summarise the main result of Sections~\ref{sec:free_space_diagram},~\ref{sec:vertex_to_vertex}, and~\ref{sec:arbitrary_ref_trajectory} respectively. 

\subsection{Computing the Free Space Diagram}
\label{sec:techview:free_space_diagram_construction}

Our algorithm constructs a simplified free space diagram that preserves trajectory lengths. The size (in terms of Euclidean length) of the simplified free space diagram is the same as the size of the unsimplified free space diagram. The only difference between the two diagrams is that approximate distances are used in the simplified diagram. In particular, we define a function that uniformly maps a trajectory to its simplification, and we calculate the distance between the mapped simplification points instead of points on the original trajectory. We prove that the complexity of the simplified free space diagram will be at most $O(cn / \varepsilon)$, and that the trajectory lengths in the diagram are preserved. Next, we build the simplified free space diagram. We use an algorithm by Conradi and Driemel~\cite{conradiComputingKShortcutFrechet2022} to query pairs of nearby segments. Finally, we construct a data structure on the free space diagram so that we can access the closest non-empty cells below, above, to the left, and to the right in constant time. Putting this all together, we obtain Theorem~\ref{thm:free_space_summarised}. For a full proof, see Section~\ref{sec:free_space_diagram}.

\begin{restatable}{theorem}{freespacesummarised}
\label{thm:free_space_summarised}
Given a pair of trajectories, one can construct a simplified free space diagram in~$O(\constructFdComplexity)$ time, so that the simplified free space has complexity~$O(cn / \varepsilon)$, it approximates the Fr\'echet distance to within a factor of~$(1+\varepsilon)$, and it preserves the trajectory lengths of the original trajectory.
\end{restatable}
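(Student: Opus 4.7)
The plan is to combine a standard $\varepsilon$-simplification with a length-preserving reparameterization, then extract the non-empty cells using a segment-pair proximity data structure, and finally augment the cells with a neighbor-access index.

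First I would compute, for each of the two trajectories, a vertex-restricted $\varepsilon d$-simplification in the Imai-Iri style used by Driemel et al.~\cite{DriemelEtAl2012ApproximatingFrechetDistance}, obtaining polylines of complexity $O(cn/\varepsilon)$ whose \Frechet distance to the respective originals is at most $\varepsilon d$. To keep the parametric axes of the free space equal to the arclengths of the \emph{original} trajectories rather than the (typically shorter) lengths of the simplified polylines, I would introduce, for each trajectory $T$, a piecewise-linear mapping $\pi_T:[0,\mathrm{length}(T)]\to\mathbb{R}^2$ that sends each maximal sub-polyline of $T$ between two consecutive retained vertices uniformly (by arclength-proportion) onto the corresponding simplified edge. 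The simplified free space diagram is then defined by the predicate $\|\pi_{T_1}(s)-\pi_{T_2}(t)\|\le d$, so a monotone path directly yields a matching on the original parameterization while comparing \emph{simplified} points; this immediately preserves trajectory lengths as required. For the approximation, any Fréchet matching between the originals can be converted into a matching between the $\pi$-images with additive cost $O(\varepsilon d)$ using the simplification error, and vice versa, so after rescaling $\varepsilon$ the predicate at threshold $d$ captures the true Fréchet distance to within a $(1+\varepsilon)$ factor.

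Next, I would bound the complexity of the simplified diagram. A non-empty cell corresponds to a pair of simplified edges whose point-to-point distance drops to at most $d$. Because the image of $\pi_T$ is the simplified polyline, which inherits $c$-packedness up to constants, the counting argument of Driemel et al.~\cite{DriemelEtAl2012ApproximatingFrechetDistance} applies essentially verbatim: each simplified edge is within distance $d$ of a total edge-length of $O(c(\ell+d))$ on the other curve, and combined with the $\Omega(\varepsilon d)$ lower bound on simplified edge lengths this sums to $O(cn/\varepsilon)$ non-empty cells in total.

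For the runtime I would invoke the segment-pair proximity data structure of Conradi and Driemel~\cite{conradiComputingKShortcutFrechet2022} on the two simplified polylines to enumerate all pairs of simplified segments within distance $d$ in $O(\constructFdComplexity)$ time; each such pair yields one non-empty cell whose boundary is computable in $O(1)$ time from the usual quadratic formulae. To support constant-time navigation from a cell to its closest non-empty neighbor in each of the four cardinal directions, I would bucket the enumerated cells by row and by column, sort each bucket, and link the entries, so that a neighbor query becomes an $O(1)$ predecessor/successor lookup once the buckets are built. The dominant cost is the proximity query, yielding the claimed $O(\constructFdComplexity)$ total. The most delicate step, I expect, is verifying that the length-preserving reparameterization simultaneously preserves the $c$-packedness-based cell count and the $(1+\varepsilon)$-approximation of the Fréchet distance; once that compatibility is established, the construction time and navigation data structure follow from standard tools.
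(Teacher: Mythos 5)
The main gap is the choice of reparametrization map. Your $\pi_T$ sends each maximal sub-polyline $T_{uv}$ (between consecutive retained vertices $u,v$) uniformly by arclength onto the segment $(u,v)$, but this does \emph{not} give the pointwise bound $\mathrm{dist}(T(x),\pi_T(x))=O(\ed)$ on which the correctness of the construction rests. Concretely, the part of $T_{uv}$ inside the ball $B(u,\ed)$ can have arclength up to $c\ed$ (by $c$-packedness), while $(u,v)$ can be much longer than $\ed$; the uniform arclength map then sends the exit point $w\in\partial B(u,\ed)$ to a point at arclength roughly $c\ed$ along $(u,v)$, so $\mathrm{dist}(T(x),\pi_T(x))=\Theta(c\ed)$ in the worst case. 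The resulting free space only satisfies $\mathcal{D}_d\subseteq\mathcal{D}'_{(1+O(c\e))d}\subseteq\mathcal{D}_{(1+O(c\e))d}$, and rescaling $\e$ by $1/c$ to recover a $(1+\e)$ guarantee loses extra factors of $c$ in the complexity. Your fallback argument---composing a Fréchet matching between originals with the simplification matchings---only certifies the \emph{full-curve} Fréchet distance: that composition reparametrizes both axes, shifting where the monotone path starts and ends, which is exactly what must not happen when the algorithm reads off subtrajectory positions and lengths from the diagram. So ``preserves trajectory lengths'' is not automatic; the mapping must preserve the free-space set-inclusion \emph{pointwise at the same coordinates}.

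The paper avoids this with the two-piece map of Definition~\ref{def:function}: $T_{uv}$ is split at the first exit $w$ from $B(u,\ed)$, and $[u,w)$ is mapped uniformly onto $[u,u')$ while $[w,v]$ is mapped uniformly onto $[u',v]$, where $u'$ is the intersection of $(u,v)$ with $\partial B(u,\ed)$. Then every point and its image are simultaneously inside $B(u,\ed)$, or lie on two straight segments whose left endpoints are both on $\partial B(u,\ed)$ and whose right endpoints coincide at $v$; either way $\mathrm{dist}(T(x),f_{T,\ed}(T(x)))\le 2\ed$, which is exactly what drives Lemma~\ref{lemma:free_space_inclusive}. Each composition $f_{T,\ed}\circ(\text{arclength parametrization})$ is still affine on each of the two pieces, so the per-pair cells are computed in $O(1)$ time as four clipped ellipses rather than one. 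With that substitution, the rest of your plan---the $c$-packedness counting argument, the Conradi--Driemel pair enumeration, and the bucket-and-sort neighbor links---matches the paper's Section~\ref{sec:free_space_diagram}.
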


\subsection{Reference trajectory is vertex-to-vertex}
\label{sec:techview:vertex_to_vertex}

Next, we focus on the special case where the reference trajectory is vertex-to-vertex. Three data structures are used in the vertex-to-vertex subtrajectory cluster algorithm of Gudmundsson and Wong~\cite{GudmundssonWong2022Cubicupperlower} --- a directed graph, a range tree, and a link-cut tree. For an overview of these data structures, see Appendix~\ref{appendix:graph_construction}. Originally, the number of leaves per range tree is~$O(n)$, and the directed graph has complexity~$O(n^2)$. We use the $c$-packedness property to prove that, in our simplified free space diagram, the number of leaves per range tree is~$O(c/\varepsilon)$, and the directed graph has complexity~$O((cn/\varepsilon) \log (c/\varepsilon))$. The link-cut tree data structure can be used without modification. Putting this all together, we obtain Theorem~\ref{thm:vertex_to_vertex}. Recall that~$m$ is the desired number of subtrajectories in the cluster. For a full proof, see Section~\ref{sec:vertex_to_vertex}. 

\begin{restatable}{theorem}{vertextovertex}
\label{thm:vertex_to_vertex}
There is an $O(nm \log(c/\varepsilon) \log(n/\varepsilon))$ time algorithm that solves $\mathtt{SC(T, m, l, (1 + \varepsilon)d)}$ in the case that the reference trajectory is vertex-to-vertex.
\end{restatable}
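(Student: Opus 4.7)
The plan is to replay the vertex-to-vertex subtrajectory cluster algorithm of Gudmundsson and Wong~\cite{GudmundssonWong2022Cubicupperlower} on top of the simplified free space diagram from Theorem~\ref{thm:free_space_summarised}, and to sharpen the sizes of their supporting data structures using the $c$-packedness of $T$.

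First I would invoke Theorem~\ref{thm:free_space_summarised} to build $\fd{(1+\varepsilon)d}{T}{T}$ in time $O(\constructFdComplexity)$ with free-space complexity $O(cn/\varepsilon)$, $(1+\varepsilon)$-approximation of Fr\'echet distances, and preserved trajectory lengths. The length-preserving property is crucial: it lets the reference-length constraint $l$ be inherited without change, so the greedy plane sweep of Gudmundsson-Wong can be executed cell-by-cell on the simplified diagram. Because the reference trajectory endpoints are restricted to vertices of $T$, the left and right sweeplines $l_s$ and $l_t$ each stop at only $O(n)$ positions.

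Next I would maintain the three auxiliary data structures of Gudmundsson-Wong on the simplified diagram: a directed reachability graph on non-empty cells, one range tree per vertex that supports locating the lowest feasible point $p_i$ on $l_s$ above the previous $q_{i-1}$, and a global link-cut tree that stores the current set of greedy $p_iq_i$ monotone paths and supports their non-crossing insertion and deletion. The link-cut tree operates on a forest of size $O(cn/\varepsilon)$, so each of its operations costs $O(\log(n/\varepsilon))$. The saving comes from the range tree: by $c$-packedness, any disk of radius $O(d)$ around a vertex of $T$ contains total curve length $O(cd)$, and because the simplification underlying Theorem~\ref{thm:free_space_summarised} has edge length $\Theta(\varepsilon d)$, each column of $\fd{(1+\varepsilon)d}{T}{T}$ contains only $O(c/\varepsilon)$ non-empty cells. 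This caps the range tree leaves per vertex at $O(c/\varepsilon)$, the cost of a range-tree query at $O(\log(c/\varepsilon))$, and the total size of the directed reachability graph at $O((cn/\varepsilon)\log(c/\varepsilon))$, matching the figures claimed in the technical overview.

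Combining the pieces, the outer sweep processes $O(n)$ vertex stopping positions; at each position the greedy step performs at most $m$ path insertions or deletions, and each operation costs $O(\log(c/\varepsilon))$ in the range tree plus $O(\log(n/\varepsilon))$ in the link-cut tree, yielding the advertised $O(nm\log(c/\varepsilon)\log(n/\varepsilon))$ bound. The main obstacle is verifying that the column-packedness argument transfers uniformly across sweep events: I need to check that the non-empty cells relevant to a given $l_s$ really correspond to simplification edges lying inside a disk of radius $O(d)$ around the current vertex on $l_s$, so that $c$-packedness applies; and that the greedy reuse of monotone paths across consecutive vertex intervals remains valid when the underlying curve is the simplification rather than the original. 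Once these invariants are established, the rest of the analysis is a direct recapitulation of Gudmundsson-Wong with each factor of $n$ in their per-interval bounds replaced by $c/\varepsilon$.
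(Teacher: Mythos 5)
Your overall plan mirrors the paper's proof very closely: build the simplified free space diagram from Theorem~\ref{thm:free_space_summarised}, re-run the Gudmundsson--Wong machinery (directed graph, range trees, link-cut tree), and use a strip/column-packedness bound (the paper's Lemma~\ref{lemma:strip_limit}) to cap the number of non-empty aggregated cells per row or column at $O(c/\varepsilon)$, which in turn shrinks each range tree to $O(c/\varepsilon)$ leaves and the graph to $O((cn/\varepsilon)\log(c/\varepsilon))$ edges. Up to that point the proposal is faithful to the paper.

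The gap is in the final accounting. You write that each greedy path operation ``costs $O(\log(c/\varepsilon))$ in the range tree \emph{plus} $O(\log(n/\varepsilon))$ in the link-cut tree'' and conclude the advertised bound, but that arithmetic gives $O(nm(\log(c/\varepsilon)+\log(n/\varepsilon))) = O(nm\log(n/\varepsilon))$, not the product $O(nm\log(c/\varepsilon)\log(n/\varepsilon))$. The product is genuine and arises differently: each of the $O(nm)$ greedy critical points is connected, via a range-tree query, to $O(\log(c/\varepsilon))$ \emph{distinct nodes} of the range tree, so it contributes $O(\log(c/\varepsilon))$ edges to the reachability graph $G$ --- that is the content of the paper's Lemma~\ref{lemma:number_of_greedy_crit_points}, giving $O(nm\log(c/\varepsilon))$ edges in total. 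Then each edge of $G$ is added to and removed from the link-cut tree at most once, and each such insertion or deletion costs $O(\log(n/\varepsilon))$ amortised. So the $\log(c/\varepsilon)$ factor multiplies the number of link-cut operations, rather than sitting alongside the link-cut cost additively, and that is what produces $O(nm\log(c/\varepsilon)\log(n/\varepsilon))$. Your proof as written asserts the right bound but does not actually derive it; once you replace ``each operation costs $A$ plus $B$'' with ``each greedy point generates $O(\log(c/\varepsilon))$ edges and each edge incurs $O(\log(n/\varepsilon))$ link-cut work,'' the argument goes through.
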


\subsection{Reference trajectory is arbitrary}
\label{sec:techview:arbitrary_reference}

Finally, we tackle the general case where the reference trajectory is arbitrary. The main obstacle in the general case is that there are~$\Theta(n^3)$ internal critical points that correspond to potential starting and ending positions of the reference trajectory. In fact, Gudmundsson and Wong~\cite{GudmundssonWong2022Cubicupperlower} show that, for general (not $c$-packed) curves, these internal critical points are essentially unavoidable. They use the~$\Theta(n^3)$ internal critical points to show that under the Strong Exponential Time Hypothesis (SETH), there is no~$O(n^{3 - \delta})$ time algorithm for subtrajectory cluster for any~$\delta > 0$.

Our main lemma in this section is to bound the number of internal critical points for subtrajectory cluster on $c$-packed trajectories. The lemma uses the $c$-packedness property in two different ways. First, the $c$-packedness property bounds the complexity of the simplified free space diagram to linear. This replaces one of the factors of~$n$ with $c/\varepsilon$. Second, the $c$-packedness property is used to prove that in any horizontal strip, only a constant number of cells have free space. This replaces another factor of~$n$ with $c/\varepsilon$, resulting in~$O(c^2n/\varepsilon^2)$ internal critical points. Finally, we prove that the interval management data structure can be used in the same way as in Gudmundsson and Wong's algorithm~\cite{GudmundssonWong2022Cubicupperlower}. Putting this all together, we obtain Theorem~\ref{thm:arbitrary_reference}. For a full proof, see Section~\ref{sec:arbitrary_ref_trajectory}.

\begin{restatable}{theorem}{arbitraryreference}
\label{thm:arbitrary_reference}
There is an $O((c^2 n/\varepsilon^2)\log(c/\varepsilon)\log(n/\varepsilon))$ time algorithm that solves $\mathtt{SC(T, m, l, (1 + \varepsilon)d)}$ in the case that the reference trajectory is arbitrary.
\end{restatable}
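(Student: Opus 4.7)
The plan is to follow the skeleton of the Gudmundsson--Wong sweepline algorithm \cite{GudmundssonWong2022Cubicupperlower} but run it on the simplified free space diagram from Theorem~\ref{thm:free_space_summarised}, and to use $c$-packedness twice to cut the event count from $\Theta(n^3)$ down to $O(c^2 n/\varepsilon^2)$. At each event the sweeplines $l_s, l_t$ stop at an internal critical point, and we update an interval-management structure that keeps track of the greedy monotone paths computed so far. My aim is to show that (i) the total number of events is $O(c^2 n/\varepsilon^2)$, (ii) each event costs $O(\log(c/\varepsilon)\log(n/\varepsilon))$ amortised time, and (iii) initialisation fits within the same budget.

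First I would define the event set. The internal critical points occur whenever $l_s$ or $l_t$ crosses a cell boundary of the simplified free space diagram, or whenever the topology of reachable pairs changes inside some cell (these are the ``type-B/C'' events in \cite{GudmundssonWong2022Cubicupperlower}). Theorem~\ref{thm:free_space_summarised} already gives $O(cn/\varepsilon)$ non-empty cells, which replaces one factor of $n$ with $c/\varepsilon$ in the count of column- and row-crossing events. The harder reduction is on the internal events: in the original analysis there are $\Theta(n)$ non-empty cells inside each horizontal strip (i.e.\ a row of the diagram corresponding to a single edge of $T$), producing a cubic total. Here I would invoke a strip bound (the role of Lemma~\ref{lemma:strip_limit}) asserting that on a $c$-packed trajectory, for any fixed edge $e$ of $T$ the number of edges of $T$ whose $(1+\varepsilon)d$-neighborhood meets $e$ is $O(c/\varepsilon)$; this is because those edges must all lie inside a union of $O(|e|/d)$ balls of radius $O(d)$ around $e$, so their total length is $O(c \cdot |e|)$, and after simplification each contributes length $\Omega(\varepsilon d)$. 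Multiplying the $O(cn/\varepsilon)$ strips by the $O(c/\varepsilon)$ per-strip cells and by the $O(1)$ internal events per cell yields $O(c^2 n/\varepsilon^2)$ events in total.

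Next I would show that the data structures of \cite{GudmundssonWong2022Cubicupperlower} port to this setting. The directed graph encoding reachability between cells now has $O((cn/\varepsilon)\log(c/\varepsilon))$ edges by the analysis sketched for Theorem~\ref{thm:vertex_to_vertex}: each of the $O(cn/\varepsilon)$ cells connects to its $O(c/\varepsilon)$ strip-mates via a range tree of $O(c/\varepsilon)$ leaves, contributing a $\log(c/\varepsilon)$ factor. The link-cut tree supporting greedy monotone-path maintenance is unchanged and handles insert/delete/path-query in $O(\log(n/\varepsilon))$ on a forest of $O(n/\varepsilon)$ nodes. Hence an event costs $O(\log(c/\varepsilon)\log(n/\varepsilon))$ amortised, and $O(c^2 n/\varepsilon^2)$ events give the stated total.

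The main obstacle will be the horizontal-strip bound: carefully showing that the number of internal critical events inside a strip is $O(c/\varepsilon)$ rather than $\Omega(n)$. This requires combining the simplification (which guarantees each kept edge has length $\Omega(\varepsilon d)$ while preserving the trajectory length used in the $l$-constraint) with a packing argument along the strip to bound how many edges can approach a single reference edge. Once this combinatorial lemma is in place the rest is an essentially mechanical substitution of the improved parameters into the Gudmundsson--Wong sweep, with one small subtlety to verify: that moving $l_s$ and $l_t$ continuously (rather than only to vertex events) does not introduce additional critical points beyond those counted above, which follows because the greedy paths only change when a cell boundary is crossed or a cell's internal reachability combinatorics changes.
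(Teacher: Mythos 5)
Your plan follows the paper's architecture quite closely: simplified free space diagram, the two-fold use of $c$-packedness (first $O(cn/\varepsilon)$ non-empty cells, then $O(c/\varepsilon)$ cells per strip), and the range tree / link-cut tree / interval management combination. The event count and per-event cost you propose do multiply out to the stated bound. Nonetheless there are a few details worth tightening.

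First, the strip bound as you phrase it does not close. You bound the length of $T$ near a \emph{full} edge $e$ by $O(c\cdot |e|)$, which after simplification gives $O(c|e|/(\varepsilon d))$ nearby segments --- this is $O(c/\varepsilon)$ only when $|e|=O(d)$, and an edge of the simplified curve can be arbitrarily long. The paper's Lemma~\ref{lemma:strip_limit} instead fixes a subsegment $u'\subseteq (u,v)$ of length $\Theta(\varepsilon d)$ (equivalently a horizontal strip of width $\geq \varepsilon d$), applies $c$-packedness to the sausage $u'\oplus B(0,(1+\hat\varepsilon)d)$, and only then divides by $\varepsilon d$; you should localize your packing argument to an $\varepsilon d$-long window rather than a whole edge. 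Second, the edge count you quote, $O((cn/\varepsilon)\log(c/\varepsilon))$, is the vertex-to-vertex figure from Lemma~\ref{lemma:constructing_G}; in the arbitrary case the $O(c^2n/\varepsilon^2)$ internal critical points are themselves inserted as nodes in $G$, each with $O(\log(c/\varepsilon))$ range-tree edges, so $G$ in fact has $O((c^2n/\varepsilon^2)\log(c/\varepsilon))$ edges and $O(c^2n/\varepsilon^2)$ nodes --- your stated count is inconsistent with your own per-event amortization, even though the final total comes out right. Third, you mention the interval-management structure in passing, but it is doing essential work: without it, following the greedy approach literally generates $O(c^2 m n/\varepsilon^2)$ greedy critical points and yields the $m$-dependent bound of Lemma~\ref{lemma:arbitrary_reference_no_interval_management}; the dynamic monotonic interval data structure of Gavruskin et al. is precisely what replaces that $m$-factor by an $O(\log n)$-amortized update per internal critical point. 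Finally, a minor point: the count of "$O(1)$ internal events per pair of cells" requires ruling out a degenerate configuration with infinitely many $l$-apart critical points, which the paper handles by a small perturbation.
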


\section{Computing the Free Space Diagram} \label{sec:free_space_diagram}
In this section, we will explain the process of constructing a simplified free space diagram for two $c$-packed polygonal curves $P$ and $Q$. The free space $\mathcal{D}_d(P, Q)$ describes all pairs of points, one on $P$, one on $Q$, whose distance is at most $d$ \cite{altComputingFrechetDistance1995}. With slight abuse of notation, we parameterise the polygonal curve $P$ such that $P(x)$ is a point on $P$, where $x \in [0, \abse{P}]$. Formally, 
\begin{align*}
    \mathcal{D}_d(P, Q) = \{(x, y) \in [0, \abse{P}] \times [0, \abse{Q}] \mid dist(P(x), Q(y)) \leq d\}.
\end{align*}

To circumvent the quadratic free space complexity, Driemel et al.~\cite{DriemelEtAl2012ApproximatingFrechetDistance} showed that the free space complexity of two simplified $c$-packed curves is $O(cn/\e)$. Given a $c$-packed curve $P = p_1p_2...p_n$, we simplify $P$ into its $\ed$-simplification $P' = q_1q_2...q_k$ as follows. Let $B(a, r)$ be the ball centered at $a$ with radius $r$. First, set $q_1 = p_1$. With $q_i$ defined, traverse $P$ from $q_i$ until a vertex $v$ is outside $B(q_i, \varepsilon d)$, or $v$ is the last vertex of $P$, and set $q_{i+1} = v$. Continue until all vertices of $P$ are exhausted. Driemel et al.~\cite{DriemelEtAl2012ApproximatingFrechetDistance} showed that the $\ed$-simplification of a $c$-packed curve is at most $6c$-packed~\cite[Lemma~4.3]{DriemelEtAl2012ApproximatingFrechetDistance}, and the Fréchet distance between $P$ and $P'$ is at most $(1 + \e)d$. A simplified curve has the useful property that every segment but the last is at least $\ed$ long. We assume for simplicity that the last one is at least $\ed$ long as well, since otherwise one can backtrack and modify the simplified curve such that each segment is at least $\ed/2$ long, and our arguments can be extended to such case. 

\begin{observation}
    One can simplify a polygonal curve $P$ into its $\ed$-simplification $P'$ such that the \Frechet distance between $P$ and $P'$ is at most $(1 + \e)d$, and each segment in $P'$ is at least $\ed$ long.
\end{observation}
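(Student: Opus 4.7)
My plan is to piece the two claims together directly from the greedy construction described immediately before the observation, citing Lemma~4.3 of Driemel et al.~\cite{DriemelEtAl2012ApproximatingFrechetDistance} for the Fréchet distance bound. That lemma already establishes that the $\ed$-simplification $P'$ produced by the stated rule satisfies $\df{P,P'}\le(1+\e)d$, so the first half of the observation requires no new work beyond invoking the citation.

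For the length bound, I would argue directly from the construction. Since $q_{i+1}$ is defined as the first vertex of $P$ encountered after $q_i$ that lies outside $B(q_i,\ed)$, the Euclidean distance $\abse{q_iq_{i+1}}$ exceeds $\ed$ whenever such a vertex exists. Hence every segment of $P'$ is at least $\ed$ long, with the sole potential exception being the final segment $q_{k-1}q_k$: if $P$ terminates before any vertex escapes the ball $B(q_{k-1},\ed)$, the greedy rule forces $q_k=p_n$ and the trailing segment may be shorter than $\ed$.

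To handle this corner case, I would apply the backtracking fix hinted at in the paragraph following the observation. If $\abse{q_{k-1}q_k}<\ed$, delete $q_{k-1}$ from $P'$, so that the last two segments $q_{k-2}q_{k-1}$ and $q_{k-1}q_k$ are replaced by the single segment $q_{k-2}q_k$. Because $q_{k-1}\in B(q_{k-2},\ed)$ by the greedy rule and $q_k\in B(q_{k-1},\ed)$ by the short-segment hypothesis, the triangle inequality gives $\abse{q_{k-2}q_k}\le 2\ed$, and one can reroute the Fréchet matching restricted to the deleted pair of segments through the merged segment at additional cost $O(\ed)$. After absorbing the constant into $\e$, this preserves the $(1+\e)d$ Fréchet guarantee while ensuring that each remaining segment has length at least $\ed/2$, which, as the paper notes, is the form actually used in all later arguments.

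The main obstacle I anticipate is the Fréchet bookkeeping during backtracking: one has to verify that the rerouted matching on the merged segment does not displace any point of $P$ by more than a constant multiple of $\ed$ from its partner on $P'$, relying on the two ball-containment facts above together with the fact that all trailing subcurve activity is confined to a region of diameter $O(\ed)$. Once this is checked, both conclusions of the observation follow immediately from the greedy rule and the cited lemma.
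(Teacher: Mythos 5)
Your framing is correct: cite Driemel et al.\ for the Fréchet bound, note the greedy rule forces every non-final segment to exceed $\ed$, and repair the final segment by ``backtracking.'' But the repair step contains a sign error. You assert that $q_{k-1}\in B(q_{k-2},\ed)$ ``by the greedy rule,'' whereas the greedy rule does exactly the opposite: $q_{k-1}$ is chosen as the \emph{first vertex outside} $B(q_{k-2},\ed)$, so $\abse{q_{k-2}q_{k-1}}>\ed$, and in fact there is no upper bound on this distance at all, since the edge of $P$ entering $q_{k-1}$ may be arbitrarily long. Consequently your triangle-inequality bound $\abse{q_{k-2}q_k}\le 2\ed$ is unfounded, and the claim that ``all trailing subcurve activity is confined to a region of diameter $O(\ed)$'' fails. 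The Fréchet cost of rerouting can still be bounded, but the correct reason is different: one matches the prefix of $P_{q_{k-2}q_{k-1}}$ that stays inside $B(q_{k-2},\ed)$ to the endpoint $q_{k-2}$, and matches the single exiting edge together with $P_{q_{k-1}q_k}$ to the segment $q_{k-2}q_k$ using the fact that the two pairs of endpoints are within $\ed$ of each other — not the (false) fact that everything lies in a ball of radius $O(\ed)$.

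There is also a gap on the length side, which is the part the observation is actually about. After deleting $q_{k-1}$, the reverse triangle inequality gives only $\abse{q_{k-2}q_k}\ge\abse{q_{k-2}q_{k-1}}-\abse{q_{k-1}q_k}>\ed-\ed=0$, so the merged segment can be arbitrarily short: the fix as stated can \emph{introduce} a segment shorter than $\ed/2$. To salvage the $\ed/2$ guarantee you should only merge when $\abse{q_{k-1}q_k}<\ed/2$ (so that the merged segment has length $>\ed-\ed/2=\ed/2$), and leave $P'$ unchanged when $\ed/2\le\abse{q_{k-1}q_k}<\ed$, since the final segment is then already long enough. The paper itself does not spell out the backtracking argument — it merely asserts that the fix exists and gives $\ed/2$ — so you are proving more than the paper does; but as written your version of the fix is incorrect on both counts.
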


Simplifying two $c$-packed curves can reduce the free space complexity, but using the plane-sweep algorithm to solve the SC problem on the resulting free space diagram is unfortunately infeasible. This is because the total length of the simplified trajectories can be much shorter, making it impossible to slide a window of width $l$ on the free space diagram $\mathcal{F}_{(1 + \e)d}(P', Q')$. To address this issue, we developed a tool that enables the construction of a free space diagram that maintains the original curve length, while also benefiting from the reduced free space complexity.

\subsection{Simplifying the Free Space}
In this section, we introduce a method that simplifies the free space. We show that we can construct the \textit{simplified free space} $\mathcal{D}'_{(1 + \hat{\e})d}(P, Q)$, where $\ehat$ is at most $8\e$, such that the complexity of the simplified free space is at most $O(cn/\ehat)$. In addition, $\mathcal{D}'_{(1 + \ehat)d}(P, Q)$ contains $\mathcal{D}_d(P, Q)$ as a subset, but it is not bigger than the free space of $P$ and $Q$ if we approximate their \Frechet distance, i.e., $\mathcal{D}'_{(1 + \ehat)d}(P, Q) \subseteq \mathcal{D}_{(1 + \ehat)d}(P, Q)$.

We will first define a function that uniformly maps parts of the polygonal curve $P$ to segments of $P'$ in Definition~\ref{def:function}, using which we will formally define the simplified free space in Definition~\ref{def:simpl-free-space}. We will then formally prove the set inclusions mentioned above in Lemma~\ref{lemma:free_space_inclusive}. 

\begin{definition}
\label{def:function}
Let $simpl(P, \ed)$ be the $\ed$-simplification of a polygonal curve $P$. Let $P_{uv}$ be the subcurve of $P$ from point $u$ to $v$ that are simplified into the segment $(u, v) \in simpl(P, \ed)$. Let $w$ be the first intersection point of $P_{uv}$ and the boundary of the ball $B(u, \ed)$ along $P_{uv}$, and let $u'$ be the intersection of $(u, v)$ and the boundary of the ball $B(u, \ed)$. Define the mapping $f_{P, \ed}: P \rightarrow simpl(P, \ed)$ such that $f_{P, \ed}$ maps $[u, w)$ to $[u, u')$ uniformly, and $[w, v]$ to $[u', v]$ uniformly (see Figure~\ref{fig:uv_simpl}).
\end{definition}

\begin{figure}[tbh]
    \centering
    \includegraphics[scale=0.75]{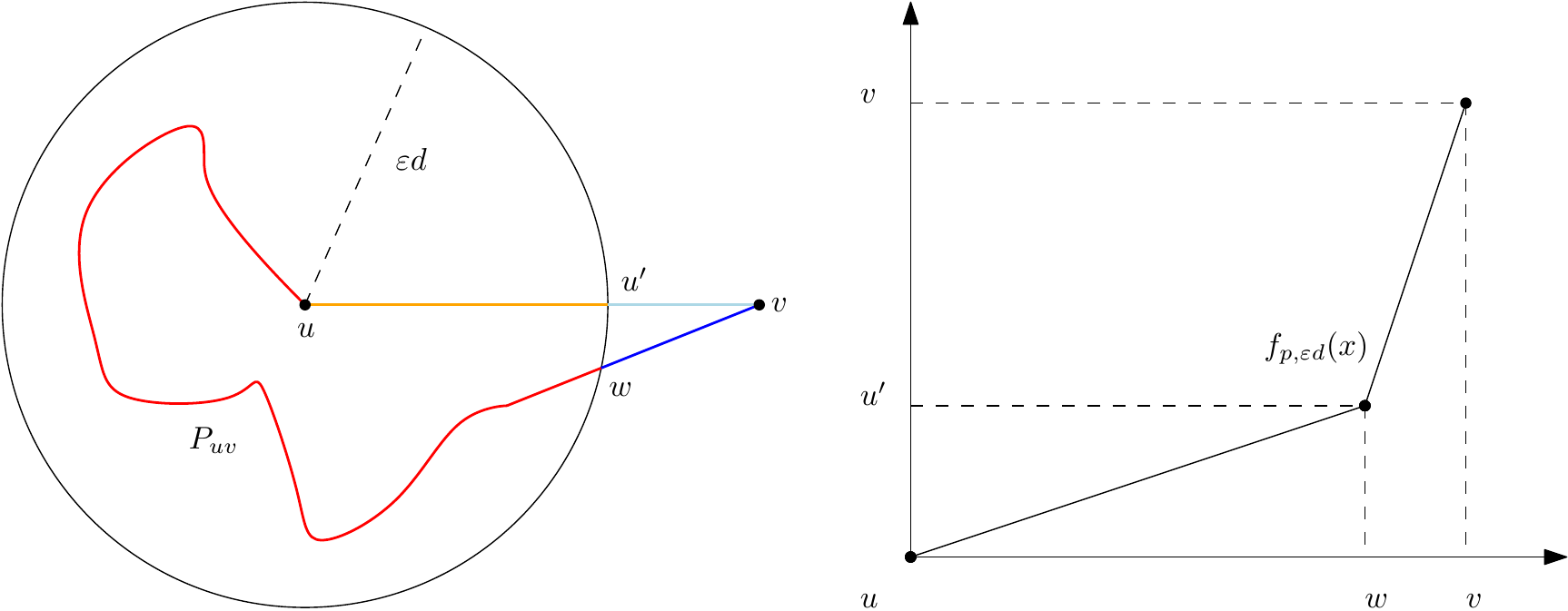}
    \caption{A figure showcasing the function in Definition~\ref{def:function}. The point $u'$ is the intersection of the segment $(u, v)$ and the ball $B(u, \ed)$, and the point $w$ is the intersection of subtrajectory $P_{uv}$ and $B(u, \ed)$. The function $f_{P, \ed}$ uniformly maps $P_{uw}$ (red) to $(u, u')$ (orange), not including $u'$ and $w$. The function $f_{P, \ed}$ uniformly maps $P_{wv}$ (blue) to $(u', v)$ (light blue).}
    \label{fig:uv_simpl}
\end{figure}


\begin{definition}
\label{def:simpl-free-space}
Define the simplified free space of $P$ and $Q$ with respect to the \Frechet distance $d > 0$, and a parameter $\e > 0$ as
\begin{align*}
    \mathcal{D}'_{(1 + \e)d}(P, Q) = \{(x, y) \in [0, \abse{P}] \times [0, \abse{Q}] \mid dist(f_{P, \ed}(P(x)), f_{Q, \ed}(Q(y))) \leq (1 + \e) d\}.
\end{align*}
Similarly, let $\mathcal{F}'_{(1 + \e)d}(P, Q)$ be the simplified free space diagram.
\end{definition}

\begin{lemma}
\label{lemma:free_space_inclusive}
    Let $\mathcal{D}_d(P, Q)$ be the free space of curves $P$ and $Q$ with respect to the \Frechet distance $d$, and let $\mathcal{D}'_{(1 + \e)d}(P, Q)$ be their simplified free space with an approximation error $\e > 0$. Then
    $\mathcal{D}_d(P, Q) \subseteq \mathcal{D}'_{(1 + 4\e) d}(P, Q) \subseteq \mathcal{D}_{(1 + 8\e)d} (P, Q)$.
\end{lemma}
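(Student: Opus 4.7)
The plan is to reduce both inclusions to a single pointwise bound: for every point $p$ on $P$ (and symmetrically on $Q$), $dist(p, f_{P, \ed}(p)) \leq 2\ed$. Once this is established, both inclusions follow immediately by the triangle inequality applied to the two ``detours'' through $f_{P, \ed}(P(x))$ and $f_{Q, \ed}(Q(y))$. Concretely, if $(x,y) \in \mathcal{D}_d(P,Q)$ then
\begin{align*}
    dist(f_{P, \ed}(P(x)), f_{Q, \ed}(Q(y))) \le 2\ed + d + 2\ed = (1+4\e)d,
\end{align*}
giving the first inclusion, and if $(x,y) \in \mathcal{D}'_{(1+4\e)d}(P,Q)$ then
\begin{align*}
    dist(P(x), Q(y)) \le 2\ed + (1+4\e)d + 2\ed = (1+8\e)d,
\end{align*}
giving the second.

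To establish the pointwise bound, I would exploit a structural property of the $\ed$-simplification. Consider a single replacement step that contracts a subcurve $P_{uv}$ to the edge $(u,v)$, and let $p_1, \ldots, p_k$ be the intermediate vertices of $P_{uv}$. By the construction of the simplification, each $p_i$ lies in $B(u, \ed)$, while $v$ lies outside it. Because closed balls are convex, every edge $u p_1, p_1 p_2, \ldots, p_{k-1} p_k$ is entirely contained in $B(u, \ed)$, and only the final edge $p_k v$ crosses the boundary $\partial B(u, \ed)$. Hence the first exit point $w$ must lie on that final edge, and $P_{wv}$ is a single line segment from $w$ to $v$.

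Using this structural observation, the pointwise bound follows from a two-case analysis. For $p \in P_{uw}$: the subcurve $P_{uw}$ lies inside $B(u, \ed)$ (its vertices are in $B(u, \ed)$ and edges between points of a convex set stay in the set), and the segment $[u, u']$ lies inside $B(u, \ed)$ as well, since $u$ is the centre and $u' \in \partial B(u, \ed)$; both $p$ and $f_{P, \ed}(p)$ are therefore inside the ball, so their distance is at most the diameter $2\ed$. For $p \in P_{wv}$: writing $p = (1-t)w + tv$ for the normalised parameter $t \in [0,1]$ along $[w,v]$, the uniform mapping sends $p$ to $f_{P, \ed}(p) = (1-t)u' + tv$ on $[u', v]$ (both parametrisations use the same normalised parameter). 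Therefore $dist(p, f_{P, \ed}(p)) = (1-t)\cdot dist(w, u') \le dist(w, u') \le 2\ed$, where the last step uses that $w, u' \in \partial B(u, \ed)$.

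The main obstacle I anticipate is justifying the structural claim that the first exit $w$ always lies on the last edge of $P_{uv}$. Without it, $P_{wv}$ could a priori be a complicated sub-curve along which the uniform arc-length mapping to $[u', v]$ travels at varying speeds, breaking the clean convex-combination bound used in the second case. Convexity of the ball together with the defining property of the simplification (only $v$ escapes $B(u, \ed)$) rule this pathology out, after which the remainder of the argument is a short triangle-inequality calculation.
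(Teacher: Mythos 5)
Your proof is correct and follows the same approach as the paper: reduce both inclusions to the pointwise bound $dist(p, f_{P,\ed}(p)) \leq 2\ed$ and then apply the triangle inequality in each direction. The paper justifies the pointwise bound rather tersely (asserting the outside-ball case holds ``due to the simplification''), whereas you make it fully rigorous by supplying the structural observation --- that all intermediate vertices of $P_{uv}$ lie in $B(u,\ed)$, so by convexity the first exit point $w$ lies on the final edge and $P_{wv}$ is a single segment --- and then the explicit affine computation $dist(p, f_{P,\ed}(p)) = (1-t)\,dist(w, u') \leq 2\ed$, which is the detail the paper leaves implicit.
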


\begin{proof}
    With slight abuse of notation, let $x = P(x)$, and $y = Q(y)$, for $x \in [0, \abse{P}]$, and $y \in [0, \abse{Q}]$. Let $x' = f_{P, \ed}(x)$, and let $y' = f_{Q, \ed}(y)$. Observe that $dist(x, x') \leq 2\varepsilon d$ for all $x \in P$ because if $x$ is within the ball $B(u, \varepsilon d)$, then $x$ is at most $2 \varepsilon d$ apart from $x'$. If $x$ is outside $B(u, \ed)$, it is at most $\varepsilon d$ apart from $x'$, due to the simplification.  
    \begin{itemize}
        \item $\mathcal{D}_d(P, Q) \subseteq \mathcal{D}'_{(1 + 4\varepsilon) d}(P, Q)$. If a point $(x, y) \in \mathcal{D}_d(P, Q)$ is white, then $dist(x, y) \leq d$. By the triangle inequality, $dist(x', y') \leq dist(x', x) + dist(y', y) + dist(x, y) \leq 2 \varepsilon d + 2 \varepsilon d + d = (1 + 4\varepsilon) d$, hence $(x', y')$ must also be white. 
        \item $ \mathcal{D}'_{(1 + 4\varepsilon) d}(P, Q) \subseteq \mathcal{D}_{(1 + 8\varepsilon)d} (P, Q)$. Similarly, if a point $(x', y') \in \mathcal{D}'_{(1 + 4\e) d}(P, Q)$ is white, then $(x, y) \in \mathcal{D}_{(1 + 8\varepsilon)d} (P, Q)$ must also be white, because $dist(x, y) \leq dist(x', x) + dist(y', y) + dist(x', y') \leq 2 \varepsilon d + 2 \varepsilon d + (1 + 4\varepsilon)d = (1 + 8\varepsilon) d$.
        \qedhere
    \end{itemize}
\end{proof}

Similar to how we defined the $(u, v)$ cell, let the $(P_{uv}, Q_{ab})$ cells be the cells in the free space diagram defined by the subcurves $P_{uv}$ and $Q_{ab}$. We show that we can compute the intersection of the simplified free space with $(P_{uv}, Q_{ab})$ cells in constant time.

\begin{restatable}{lemma}{cellsconstant}
\label{lemma:cells_constant}
    Given vertices $u, v$ on $P$ and $a, b$ on $Q$, one can construct the cells in $\mathcal{F}'_{(1 + \e)d}(P, Q)$ defined by $P_{uv}$ and $Q_{ab}$ in constant time. 
\end{restatable}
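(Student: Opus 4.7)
The plan is to exploit that $f_{P,\ed}$ restricted to $P_{uv}$ is piecewise-affine with at most two pieces, split at the point $w$ from Definition~\ref{def:function}, and that likewise $f_{Q,\ed}$ restricted to $Q_{ab}$ is piecewise-affine with at most two pieces, split at the analogous point $w'$. Consequently, the parameter rectangle $[P_{uv}] \times [Q_{ab}]$ decomposes into at most four axis-aligned sub-rectangles, and on each of these the simplified free-space constraint is given by a single conic inequality. Constructing the $(P_{uv}, Q_{ab})$ portion of $\mathcal{F}'_{(1+\e)d}(P,Q)$ then reduces to locating the two split points and writing down up to four elliptical boundaries.

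First I would compute $w$ and $u'$ in $O(1)$ time. By the simplification procedure, every intermediate vertex of $P_{uv}$ lies strictly inside $B(u, \ed)$ while $v$ lies outside, since the simplification stops at the first vertex outside the ball. Because $B(u, \ed)$ is convex, every edge joining two intermediate vertices stays inside it, so $P_{uv}$ can leave $B(u, \ed)$ for the first time only on its final edge. Given pointers to $u$ and $v$ on $P$, the predecessor of $v$ is obtained in $O(1)$, and intersecting that final edge with $\partial B(u, \ed)$ yields $w$; intersecting the simplified segment $(u,v)$ with the same sphere yields $u'$. The symmetric computation on $Q$ produces $w'$ and $a'$.

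Next I would describe the free-space boundary on each of the (at most four) sub-rectangles determined by $w$ and $w'$. Within any such sub-rectangle, the map $(x,y) \mapsto (f_{P,\ed}(P(x)), f_{Q,\ed}(Q(y)))$ is a composition of two linear arc-length reparameterisations with the affine embedding of the parameter intervals into sub-segments of the simplified edges $(u,v)$ and $(a,b)$. Hence the constraint $\mathrm{dist}(f_{P,\ed}(P(x)), f_{Q,\ed}(Q(y))) \le (1+\e)d$ pulls back to a quadratic inequality in $(x,y)$, whose level set is an ellipse (or degenerate conic) whose coefficients depend only on the four simplified-segment endpoints, the split points $u', a'$, and the four arc-length rates. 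All of these quantities are already in hand, so the boundary conic of each sub-rectangle is produced in $O(1)$ time, for a total of $O(1)$.

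The main obstacle I foresee is handling degeneracies uniformly. If $v$ is the last vertex of $P$, then $P_{uv}$ may lie entirely inside $B(u, \ed)$ and $w$ does not exist, so the piece $P_{wv}$ is empty; if $P_{uv}$ is a single edge then the predecessor step collapses; and if the two simplified segments are parallel the pulled-back conic degenerates to a strip. In each such case one or more of the four sub-rectangles collapses to zero width or the conic simplifies, but the same constant-time recipe still applies, and the lemma follows.
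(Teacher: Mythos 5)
Your proposal follows essentially the same approach as the paper's proof: split the $(P_{uv},Q_{ab})$ block into the four sub-rectangles determined by the breakpoints of the piecewise-affine maps $f_{P,\ed}$ and $f_{Q,\ed}$, and on each sub-rectangle observe that the distance constraint is the affine pullback of a disk, hence an ellipse clipped to the rectangle. You additionally spell out (correctly, via convexity of $B(u,\ed)$) why $w$ and $u'$ are computable in $O(1)$ from the predecessor edge of $v$, a step the paper takes for granted, and you handle the degenerate conic case that the paper sidesteps via a general-position assumption.
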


\begin{proof}
See Appendix~\ref{appendix:cells_constant}.
\end{proof}

The complexity of the simplified free space $\mathcal{D}'_{(1 + \ehat)d}(P, Q)$ is $O(cn/\ehat)$ if $P$ and $Q$ are $c$-packed. Assuming that $P_{uv}$ and $Q_{ab}$ are simplified into segments $(u, v) \in P'$ and $(a, b) \in Q'$, respectively, the simplified free space intersects ($P_{uv}, Q_{ab}$) cells if and only if the distance between $(u, v)$ and $(a, b)$ is at most $(1 + \ehat)d$. The rest follows by modifying the proof of \cite[Lemma~4.4]{DriemelEtAl2012ApproximatingFrechetDistance}.

\begin{corollary} \label{cor:fewer_cells}
    Let $P$ and $Q$ be two $c$-packed curves with complexity $n$, and let $\ehat$ be a constant times a parameter $\e > 0$. The complexity of the simplified free space $\mathcal{D}'_{(1 + \ehat)d}(P, Q)$ is $O(cn/\e)$. 
\end{corollary}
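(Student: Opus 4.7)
The plan is to reduce the cell-complexity count to a count over ``super-cells'' of the form $(P_{uv}, Q_{ab})$ indexed by edges of the simplifications $P'$ and $Q'$, and then adapt the Driemel--Har-Peled--Wenk argument to bound the number of super-cells that contain any free point of $\mathcal{D}'_{(1+\hat\varepsilon)d}(P,Q)$. By Lemma~\ref{lemma:cells_constant}, each super-cell has only $O(1)$ free-space complexity, so the total complexity of the simplified free space is, up to a constant factor, the number of non-empty super-cells. Moreover, the $(P_{uv}, Q_{ab})$ super-cell is non-empty if and only if some point of $P_{uv}$ and some point of $Q_{ab}$ are mapped by $(f_{P,\varepsilon d}, f_{Q,\varepsilon d})$ to a pair at distance at most $(1+\hat\varepsilon)d$. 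Since $f_{P,\varepsilon d}$ maps $P_{uv}$ \emph{surjectively} onto the simplified segment $(u,v)\in P'$ (and likewise for $Q$), this condition reduces exactly to $\mathrm{dist}\bigl((u,v),(a,b)\bigr)\le (1+\hat\varepsilon)d$, where distance is the minimum Euclidean distance between the two line segments. Thus the task becomes: count pairs of edges $\bigl((u,v),(a,b)\bigr)\in P'\times Q'$ whose Euclidean distance is at most $(1+\hat\varepsilon)d$.

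Now I would apply the packedness argument. Recall from Driemel et al.~\cite{DriemelEtAl2012ApproximatingFrechetDistance} that if $P$ and $Q$ are $c$-packed then $P'$ and $Q'$ are $6c$-packed, and by the construction of the simplification every edge of $P'$ and $Q'$ has Euclidean length at least $\varepsilon d$ (this is why the preceding observation was stated). Fix an edge $(u,v)$ of $P'$ and consider the set $S_{uv}$ of edges $(a,b)$ of $Q'$ whose distance to $(u,v)$ is at most $(1+\hat\varepsilon)d$. Cover the segment $(u,v)$ by $O(\|(u,v)\|/(\varepsilon d) + 1)$ balls of radius $O(\varepsilon d)$; each edge in $S_{uv}$ has a point within $(1+\hat\varepsilon)d$ of some such ball, so its portion inside a concentric ball of radius $O(d)$ is nonempty. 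By the $6c$-packedness of $Q'$, the total length of $Q'$ inside such a ball is $O(cd)$, and since each edge of $Q'$ has length at least $\varepsilon d$, each ball touches at most $O(c/\varepsilon)$ edges of $Q'$. Summing over the balls covering $(u,v)$ yields $|S_{uv}| = O(c/\varepsilon) \cdot O(\|(u,v)\|/(\varepsilon d)+1)$. Summing over all edges of $P'$ and using that $P'$ has total length at most that of $P$ gives $\sum_{(u,v)\in P'} |S_{uv}| = O(cn/\varepsilon) + O((c/\varepsilon^2 d)\sum \|(u,v)\|)$, and this second term is itself bounded by $O(cn/\varepsilon)$ because the number of edges of $P'$ is $O(n)$ and each contributes at most $O(c/\varepsilon)$ pairs (one can do the symmetric sum over edges of $Q'$ and take the minimum, as in \cite[Lemma~4.4]{DriemelEtAl2012ApproximatingFrechetDistance}).

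The main obstacle I anticipate is bookkeeping in the last step: ensuring that the bound really comes out to $O(cn/\varepsilon)$ rather than $O(cn/\varepsilon^2)$. The key point, following \cite{DriemelEtAl2012ApproximatingFrechetDistance}, is to charge long edges of $P'$ to the packedness of $Q'$ and short edges (those of length $O(\varepsilon d)$) to the fact that they contribute only $O(c/\varepsilon)$ neighbours each while there are at most $n$ of them; the absorption of the $1/\varepsilon d$ factor into $c$-packedness is what saves the extra $1/\varepsilon$. Finally, multiplying the super-cell count by the $O(1)$ per-super-cell bound from Lemma~\ref{lemma:cells_constant} yields the stated $O(cn/\varepsilon)$ complexity of $\mathcal{D}'_{(1+\hat\varepsilon)d}(P,Q)$.
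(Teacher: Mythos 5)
Your proposal takes essentially the same route as the paper: reduce the complexity count to the number of pairs of simplification edges at distance at most $(1+\hat\varepsilon)d$ (the paper states the same ``intersects iff segments are close'' equivalence, which you justify more carefully via the surjectivity of $f_{P,\varepsilon d}$ onto $(u,v)$ and the $O(1)$ per-aggregated-cell bound from Lemma~\ref{lemma:cells_constant}), and then invoke the $c$-packedness counting of Driemel et al.\ \cite[Lemma~4.4]{DriemelEtAl2012ApproximatingFrechetDistance}, which the paper also defers to wholesale. One soft spot in your sketch: the stated reason that the ``second term'' is $O(cn/\varepsilon)$ --- that each of the $O(n)$ edges of $P'$ contributes at most $O(c/\varepsilon)$ pairs --- is not literally true, since a single long edge of $P'$ can be within distance $d$ of far more than $O(c/\varepsilon)$ edges of $Q'$; the genuine fix is the one your parenthetical gestures at, namely charging each nearby pair to whichever of the two edges is shorter (or to a vertex of that edge) and applying packedness in the appropriate direction, as is done in Driemel et al.'s Lemma~4.4.
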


\subsection{Compute the Non-empty Cells} \label{ssec:compute_nonempty_cells}
To take advantage of the near-linear complexity of the simplified free space, we use an algorithm by Conradi and Driemel~\cite{conradiComputingKShortcutFrechet2022} to efficiently compute the non-empty cells  without inspecting all pairs of segments. 

\begin{fact}[{\cite[Lemma 59]{conradiComputingKShortcutFrechet2022}}] \label{fact:conradiDriemel}
    Given two $c$-packed curves $P$ and $Q$ in $\mathbb{R}^2$, a parameter $d \geq 0$, and let $P'$ and $Q'$ be their $\ed$-simplifications. In $O((cn/\e) \log(cn/\e))$ time, one can find all pairs of segments $(u, v) \in P'$ and $(a, b) \in Q'$ such that the distance between $(u, v)$ and $(a, b)$ is at most $d$.
\end{fact}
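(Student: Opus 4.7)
My plan is to turn the sparse structure of $c$-packed curves into an efficient spatial index that enumerates close segment pairs in output-sensitive time. First, compute the $\ed$-simplifications $P'$ and $Q'$ in $O(n)$ time by the greedy sweep described earlier. By the result of Driemel et al., these simplifications are $O(c)$-packed and every segment has length at least $\ed$. Two immediate consequences drive the analysis: any ball of radius $2d$ meets $O(c/\e)$ segments of each simplified curve (its total length inside the ball is $O(cd)$ and each contributing segment is at least $\ed$ long), and, by Corollary~\ref{cor:fewer_cells}, the total number of close segment pairs to be reported is $O(cn/\e)$.

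\paragraph*{Algorithm.}
To avoid paying $\Theta(L/d)$ per segment when walking a uniform grid, I would partition the segments of $P'$ and $Q'$ into geometric length classes $[2^{i}\ed, 2^{i+1}\ed)$ and build a hash grid of cell side $\Theta(2^{i}d)$ per class. Each segment then occupies $O(1)$ cells at its own class, and $c$-packedness ensures each cell stores $O(c/\e)$ segments of that class. For each $s \in P'$ at class $i$, I would query the $Q'$-grids at class $i$ together with compatible neighbouring classes, retrieve the $O(c/\e)$ candidates per cell, and verify proximity in constant time using Lemma~\ref{lemma:cells_constant}. The number of length classes is $O(\log(n/\e))$, and hashing and deduplication contribute a further logarithmic factor, so the total running time is $O((cn/\e)\log(cn/\e))$.

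\paragraph*{Main obstacle.}
The main obstacle is handling cross-class interactions, since a short segment can be close to a much longer one and could in principle consult grids at many classes. I would resolve this by charging every reported pair to the coarser of the two classes, so that each pair is counted exactly once, and by invoking $c$-packedness at that coarser scale to cap the per-cell candidate list at $O(c/\e)$. Summed over the $O(\log(n/\e))$ classes and controlled against the global $O(cn/\e)$ output bound, the distance-test work stays within $O((cn/\e)\log(cn/\e))$. The supporting primitives (grid construction, cell lookup, pairwise distance tests, and deduplication via hashing) all fit in the same budget, following the recipe of Conradi and Driemel's Lemma~59~\cite{conradiComputingKShortcutFrechet2022}.
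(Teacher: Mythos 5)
First, a point of context: the paper does not prove this statement at all --- it is imported verbatim as a Fact, citing Lemma~59 of Conradi and Driemel~\cite{conradiComputingKShortcutFrechet2022}. So your attempt is a from-scratch reconstruction, which is fine in principle, but as written it has a genuine gap in the running-time analysis.

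The gap is in the treatment of length classes. The $\varepsilon d$-simplification gives a \emph{lower} bound of $\varepsilon d$ on segment lengths but no upper bound in terms of $n$, $\varepsilon$ or $d$, so your claim that there are $O(\log(n/\varepsilon))$ classes is unfounded: a $c$-packed curve can have segments of lengths $\varepsilon d, 2\varepsilon d, 4\varepsilon d, \dots, 2^{n}\varepsilon d$ (e.g.\ an outward spiral), giving $\Theta(n)$ nonempty classes. Your resolution of the cross-class obstacle --- charge each reported pair to the coarser class and invoke $c$-packedness at that scale --- is essentially the charging argument of Driemel et al.\ and it correctly bounds the \emph{output} at $O(cn/\varepsilon)$ pairs, but it does not bound the number of \emph{unsuccessful} grid probes, which is where the time bound must come from. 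As described, every segment probes the other curve's grid at every compatible (coarser) class, paying $O(1)$ lookups per class even when the answer is empty; with $\Theta(n)$ classes this is $\Theta(n^2)$ probes. Concretely, take $P'$ to be $\Theta(n)$ short segments clustered in a small region and $Q'$ a far-away $c$-packed spiral with geometrically growing segments: the output is empty, yet your algorithm performs $\Theta(n^2)$ lookups, contradicting the claimed $O((cn/\varepsilon)\log(cn/\varepsilon))$ bound. (Lumping all long segments into one coarse grid does not rescue this: either a long segment occupies $\Theta(L/(\text{cell side}))$ cells at insertion, whose total is governed by the curve length and is unbounded in $n$, or the per-cell occupancy bound is lost.) To repair the argument you need a genuinely hierarchical index in which a query only visits occupied nodes --- e.g.\ a compressed quadtree storing each segment at the level matching its length, with cross-level links, traversed in a WSPD-like fashion --- or simply cite the external lemma as the paper does. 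Two smaller nits: the claim that a ball of radius $2d$ meets $O(c/\varepsilon)$ simplified segments needs the standard step of enlarging the ball by $\varepsilon d$ (a segment may intersect the ball in an arbitrarily short piece), and Lemma~\ref{lemma:cells_constant} concerns constructing free-space cells, not segment-to-segment distance tests, which are elementary constant-time computations anyway.
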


To construct the simplified free space diagram efficiently, we first observe the following.

\begin{observation}
\label{obs:simplified_segments_near}
    If segments $(u, v) \in P'$ and $(a, b) \in Q'$ are more than $(1 + 2\e)d$ apart, then $P_{uv}$ and $Q_{ab}$ are more than $d$ apart. 
\end{observation}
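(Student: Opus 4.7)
The plan is to prove the contrapositive: assuming there exist points $x \in P_{uv}$ and $y \in Q_{ab}$ with $\|x - y\| \le d$, I would show that segments $(u, v)$ and $(a, b)$ are within distance $(1 + 2\varepsilon)d$ of each other.

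The technical heart of the argument is a geometric lemma that I would establish up front: every point on the subcurve $P_{uv}$ lies within distance $\varepsilon d$ of the segment $(u, v)$. By the construction of the $\varepsilon d$-simplification, every vertex of $P$ strictly between $u = q_i$ and $v = q_{i+1}$ lies in the ball $B(u, \varepsilon d)$, because $v$ is defined as the first vertex of $P$ that leaves this ball. Convexity of $B(u, \varepsilon d)$ then implies that each edge of $P_{uv}$ joining two intermediate vertices stays inside the ball, and hence within $\varepsilon d$ of the endpoint $u \in (u, v)$. The only piece not covered by this is the final edge, from the last intermediate vertex $p$ (inside the ball) to $v$ (possibly outside the ball). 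Here I would use a linear parametrization argument: for each $t \in [0,1]$, the point $(1-t)p + tv$ on the edge and the point $(1-t)u + tv$ on segment $(u,v)$ are separated by exactly $(1-t)\|p - u\| \le \varepsilon d$. Symmetrically, every point on $Q_{ab}$ lies within $\varepsilon d$ of $(a, b)$.

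With this lemma in hand, I would finish by picking $x^* \in (u, v)$ with $\|x - x^*\| \le \varepsilon d$ and $y^* \in (a, b)$ with $\|y - y^*\| \le \varepsilon d$, and applying the triangle inequality:
\[
\|x^* - y^*\| \;\le\; \|x^* - x\| + \|x - y\| + \|y - y^*\| \;\le\; \varepsilon d + d + \varepsilon d \;=\; (1 + 2\varepsilon)d,
\]
contradicting the hypothesis that the two segments are more than $(1 + 2\varepsilon)d$ apart.

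The main subtlety I anticipate is the final-edge case of the geometric lemma. Using the explicit mapping $f_{P, \varepsilon d}$ from Definition~\ref{def:function} only yields $\|x - f(x)\| \le 2\varepsilon d$ on the in-ball portion (as in the proof of Lemma~\ref{lemma:free_space_inclusive}), which would degrade the conclusion to $(1 + 4\varepsilon)d$. The key to preserving the tighter $(1 + 2\varepsilon)$ constant is that I only need the \emph{nearest} point on segment $(u, v)$, not specifically $f_{P,\varepsilon d}(x)$; the linear parametrization above provides such a nearby witness directly.
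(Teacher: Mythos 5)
Your proof is correct, and it supplies details the paper omits (the observation is stated without proof). The contrapositive formulation and the triangle-inequality finish are the natural route. The one nontrivial step is the geometric lemma that every point of $P_{uv}$ lies within $\varepsilon d$ of the segment $(u,v)$, and you handle it properly: by construction of the $\varepsilon d$-simplification all vertices of $P_{uv}$ strictly before $v$ lie in $B(u,\varepsilon d)$, so all edges except the last stay in the ball and are therefore within $\varepsilon d$ of $u$; your linear reparametrization of the last edge against $(u,v)$ correctly bounds the remaining case. You also correctly observe that bounding via the explicit map $f_{P,\varepsilon d}$ (as done in the proof of Lemma~\ref{lemma:free_space_inclusive}) would only give $2\varepsilon d$ per curve and hence the weaker constant $(1+4\varepsilon)$; working with the nearest point on $(u,v)$ rather than the image under $f_{P,\varepsilon d}$ is exactly what recovers the stated $(1+2\varepsilon)$.
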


The above observation enables us to determine if $(P_{uv}, Q_{ab})$ cells are empty by determining if $(u, v)$ and $(a, b)$ are near. 

\subsection{Constructing the Simplified Free Space Diagram}
Given two $c$-packed polygonal curves $P$ and $Q$, we will use the results from previous subsections to construct the simplified free space diagram using the below steps. In Lemma~\ref{lemma:cells_constant}, we showed that if $P_{uv}$ and $Q_{ab}$ are simplified into segments $(u, v) \in P'$ and $(a, b) \in Q'$, respectively, we can compute $(P_{uv}, Q_{ab})$ cells in constant time. Such aggregation of $(P_{uv}, Q_{ab})$ cells is an \textit{aggregated non-empty cell}, and we will treat them as one cell for simplicity.

\begin{enumerate}
    \item Simplify $P$ and $Q$ into their $\ed$-simplifications $P'$ and $Q'$. 
    \item Find all pairs of nearby segments from $P'$ and $Q'$ that are at most $(1 + \ehat)d$ apart using Fact~\ref{fact:conradiDriemel}.
    \item For each pair of nearby segments $(u, v) \in P'$ and $(a, b) \in Q'$, compute the $(P_{uv}, Q_{ab})$ cells using Lemma~\ref{lemma:cells_constant}.
    \item Sort all non-empty cells horizontally and vertically.
    \item Connect non-empty cells in a graph fashion such that a non-empty cell is connected to the first non-empty cells to its top, bottom, left, and right.

\end{enumerate}

\begin{figure}[tbh]
   \centering
   \includegraphics[scale=0.4]{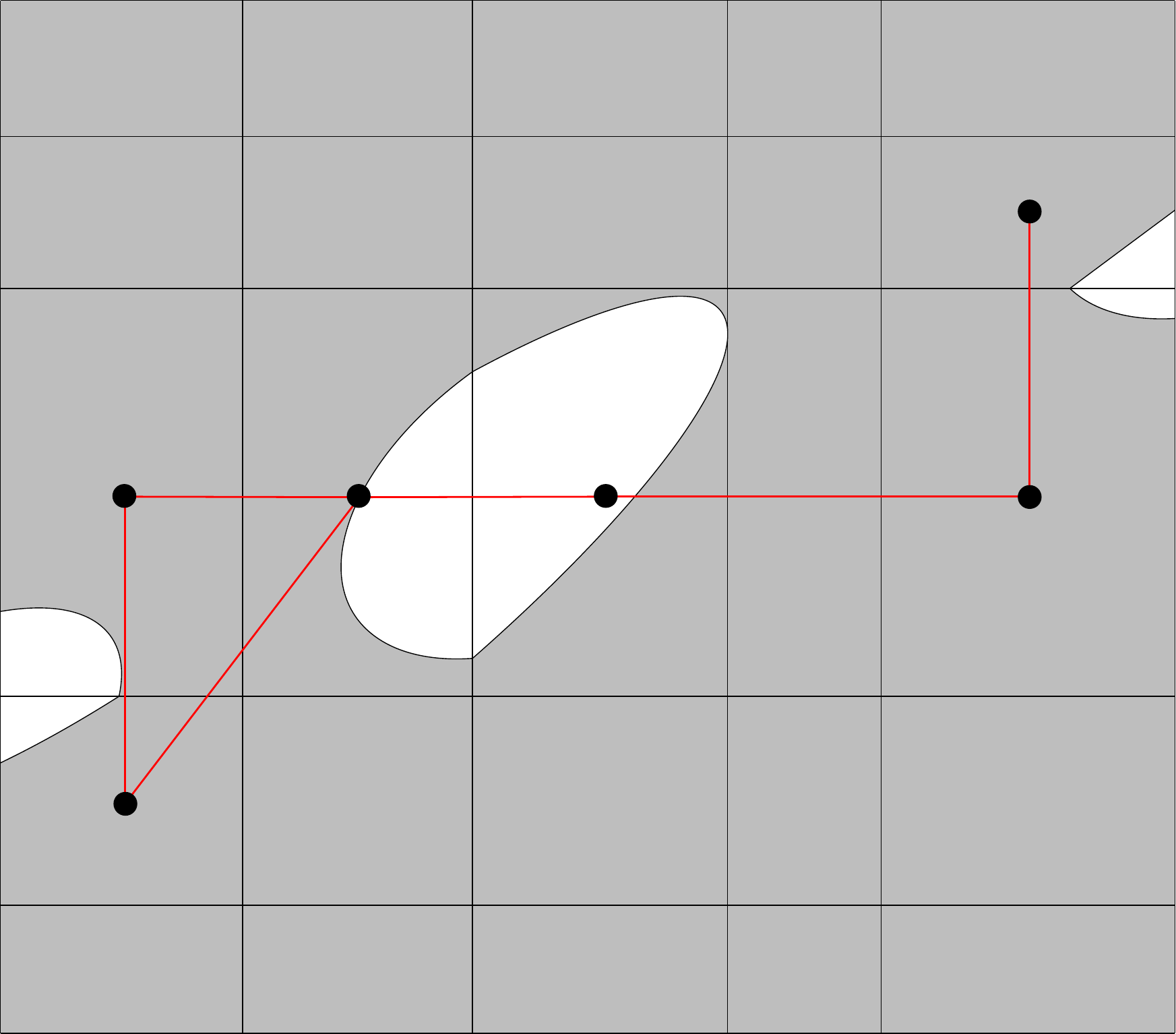}
   \caption{The non-empty cells are connected horizontally and vertically to skip empty cells.}
   \label{fig:free_cells_connected_using_linked_list}
\end{figure} 

Given two polygonal curves $P$ and $Q$ of complexity $n$, simplifying them (step 1) takes $O(n)$ time. By Fact~\ref{fact:conradiDriemel}, step 2 takes $O((cn/\varepsilon)\log(cn/\varepsilon))$ time. Computing a cell in $\mathcal{F}'_{(1 + \ehat)d}(P, Q)$ takes $O(1)$ time by Lemma~\ref{lemma:cells_constant}. $\mathcal{F}'_{(1 + \ehat)d}(P, Q)$ has at most $O(cn/\e)$ non-empty cells, which takes $O(cn/\e)$ time to compute in step 3; sorting them in step 4 takes $O((cn/\e)\log(cn/\e))$ time. Connecting each cell to at most four other cells takes $O(cn/\e)$ time in step 5. Putting this together, we obtain Lemma~\ref{lem:free_space_diagram_construction}, and we summarise our result in Theorem~\ref{thm:free_space_summarised}.

\begin{lemma}
\label{lem:free_space_diagram_construction}
Let $P$ and $Q$ be two $c$-packed curves of complexities $n$. Let $\e > 0$ and $d > 0$ be two parameters, and let $\ehat \leq 8 \cdot \e$. One can construct and connect $O(cn/\e)$ aggregated non-empty cells of the simplified free space diagram $\mathcal{F}'_{(1 + \ehat)d}(P, Q)$ in $O(\constructFdComplexity)$ time such that $\mathcal{D}_d(P, Q) \subseteq \mathcal{D}'_{(1 + \ehat) d}(P, Q) \subseteq \mathcal{D}_{(1 + \ehat)d} (P, Q)$. Given an aggregated non-empty cell $C$, one can access the first aggregated non-empty cells below, above, to the left, and to the right of $C$ in $O(1)$ time. 
\end{lemma}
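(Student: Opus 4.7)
The plan is to execute the five-step construction laid out immediately above the statement and to account for both running time and correctness of the inclusions. First, I would compute the $\ed$-simplifications $P'$ and $Q'$ of $P$ and $Q$ in $O(n)$ time by a single greedy scan that advances the ball $B(q_i, \ed)$ as in the definition preceding Lemma~\ref{lemma:cells_constant}. This gives the $\ehat$-factor slack needed later, since $P$ and $P'$, and $Q$ and $Q'$, are within Fréchet distance $\varepsilon d$ of one another.

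Next, I would invoke Fact~\ref{fact:conradiDriemel} on $P'$ and $Q'$ with distance threshold $(1+\ehat)d$ to enumerate, in $O((cn/\varepsilon)\log(cn/\varepsilon))$ time, all pairs of simplified segments $(u,v)\in P'$, $(a,b)\in Q'$ whose Euclidean distance is at most $(1+\ehat)d$. Observation~\ref{obs:simplified_segments_near} certifies that any pair of simplified segments farther apart than this threshold corresponds to a block of $(P_{uv}, Q_{ab})$ cells lying entirely outside $\mathcal{D}'_{(1+\ehat)d}(P,Q)$, so no non-empty aggregated cell is missed. Since Corollary~\ref{cor:fewer_cells} bounds the number of aggregated non-empty cells by $O(cn/\varepsilon)$, computing each one in constant time via Lemma~\ref{lemma:cells_constant} takes $O(cn/\varepsilon)$ total.

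For steps 4 and 5, I would sort the aggregated non-empty cells by their $x$-ranges and by their $y$-ranges in $O((cn/\varepsilon)\log(cn/\varepsilon))$ time, and then walk the two sorted orders once to attach, to each cell, pointers to its nearest non-empty neighbour in each of the four cardinal directions, as illustrated in Figure~\ref{fig:free_cells_connected_using_linked_list}. The resulting doubly-linked structure answers the four neighbour queries stated in the lemma in $O(1)$ time each. Summing the five step costs yields the total bound $O((cn/\varepsilon)\log(cn/\varepsilon))$. The chain of set inclusions $\mathcal{D}_d(P,Q)\subseteq\mathcal{D}'_{(1+\ehat)d}(P,Q)\subseteq\mathcal{D}_{(1+\ehat)d}(P,Q)$ is then read off from Lemma~\ref{lemma:free_space_inclusive}, with $\ehat\le 8\varepsilon$ absorbing the constant-factor rescaling of $\varepsilon$ that aligns the two inclusions.

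The main obstacle I expect is not algorithmic but bookkeeping: the threshold used inside Fact~\ref{fact:conradiDriemel} must match the $\ehat$ appearing in the definition of $\mathcal{D}'_{(1+\ehat)d}(P,Q)$ tightly enough that Observation~\ref{obs:simplified_segments_near} rules out false negatives, and the mapping $f_{P,\ed}$ of Definition~\ref{def:function} must be consistent across consecutive simplification segments so that the aggregated cells glue together correctly along the vertices of $P'$ and $Q'$. Once these matching conventions are pinned down, the time bound and the inclusion chain follow directly from the earlier lemmas of this section.
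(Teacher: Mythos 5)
Your proposal executes the same five-step construction given in the paper, cites the same facts and lemmas (Fact~\ref{fact:conradiDriemel}, Lemma~\ref{lemma:cells_constant}, Corollary~\ref{cor:fewer_cells}, Lemma~\ref{lemma:free_space_inclusive}), and arrives at the same time bounds and set inclusions; this is the paper's own argument. The subtleties you flag at the end --- the threshold matching in Fact~\ref{fact:conradiDriemel} and the consistency of $f_{P,\ed}$ across consecutive simplification segments --- are real but are handled implicitly in the paper too, so your treatment is at the same level of rigor.
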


\freespacesummarised*

\section{Reference trajectory is vertex-to-vertex} \label{sec:vertex_to_vertex}

Throughout the rest of the paper we assume that the free space diagram is the simplified free space diagram $\mathcal{F}'_{(1 + \e)d}$ in Lemma~\ref{lem:free_space_diagram_construction}. Next, we will use the algorithm by Gudmundsson and Wong \cite{GudmundssonWong2022Cubicupperlower} to determine whether there is a solution to $\mathtt{SC(T, m, l, (1 + \varepsilon)d)}$ where $T$ is a $c$-packed trajectory, and the reference subtrajectory is vertex-to-vertex.

Three data structures are used in the vertex-to-vertex subtrajectory cluster algorithm of Gudmundsson and Wong~\cite{GudmundssonWong2022Cubicupperlower} --- a directed graph, a range tree, and a link-cut tree. For an overview of these data structures, see Appendix~\ref{appendix:graph_construction}. In Section~\ref{sec:graph_construction_new}, we show that the number of leaves per range tree is~$O(c/\varepsilon)$, and the directed graph has complexity~$O((cn/\varepsilon) \log (c/\varepsilon))$. In Section~\ref{sec:linkcut_tree}, we show that the link-cut tree data structure can be used without modification.

\subsection{Using a Directed Graph to Store Candidate Monotone Paths}
\label{sec:graph_construction_new}

To show that the range tree has at most~$O(c/\varepsilon)$ leaves, it suffices to show that there exist at most $O(c/\varepsilon)$ critical points on each horizontal or vertical boundary of the simplified free space diagram. 

\begin{lemma}
\label{lemma:strip_limit}
In the simplified free space diagram $\mathcal{F}'_{(1 + \ehat)d}(T, T)$, let $H$ be a horizontal (resp. vertical) strip that is at least $\ed$ wide on its $y$-span (resp. $x$-span). The intersection of $H$ and the simplified free space $\mathcal{D}'_{(1 + \ehat)d}(T, T)$ exists in at most $O(c/\e)$ aggregated cells.
\end{lemma}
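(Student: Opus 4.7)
The plan is to combine the contraction property of the map $f_{T,\ed}$ with the $c$-packedness of the simplification $T'$; I argue the horizontal case, and the vertical case is symmetric. Throughout, I treat $H$ as an elementary strip of $y$-width $\ed$, which is the relevant regime for counting critical points on a single boundary of the simplified free space diagram.

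First, I would verify that $f_{T,\ed}\colon T \to T'$ is a contraction on every subcurve $T_{uv}$. Its first piece maps $T_{uw}$, whose arc length is at least $\|uw\|=\ed$, onto the segment $(u,u')$ of length $\ed$, so the stretch is at most $1$. Its second piece maps $T_{wv}$ onto $(u',v)$; since all interior vertices of $T_{uv}$ lie inside the convex ball $B(u,\ed)$, the only segment of $T_{uv}$ leaving $B(u,\ed)$ is the last one, hence $T_{wv}$ is a single straight segment of length $\|wv\| \geq \|uv\|-\|uw\| = \|(u,v)\|-\ed$, which equals $\|(u',v)\|$. Both stretches being at most $1$ means the $f_{T,\ed}$-image of any sub-arc of $T$ of arc length at most $\ed$ is a connected subset $\sigma \subseteq T'$ of total length at most $\ed$, contained in a Euclidean ball of radius $\ed/2$ around some centre $c$.

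Since every simplified edge has $y$-range (arc length) at least $\ed$, the strip $H$ intersects the $y$-range of at most two simplified edges $(a,b) \in T'$. By Definition~\ref{def:simpl-free-space}, an aggregated cell $((u,v),(a,b))$ contains a white point of $H$ only if some point of $(u,v)$ lies within $(1+\ehat)d$ of some point of $\sigma$. The triangle inequality then forces $(u,v)$ to intersect the ball $B := B(c, R)$ with $R = (1+\ehat)d + \ed/2 = O(d)$, since $\ehat \leq 8\e$ is a fixed small constant.

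Finally, I would bound the number of simplified edges of $T'$ that intersect $B$ by $O(c/\e)$. Since $T'$ is $6c$-packed by \cite[Lemma~4.3]{DriemelEtAl2012ApproximatingFrechetDistance}, we have $|T' \cap B(c,R+2\ed)| \leq 6c(R+2\ed) = O(cd)$. A two-case analysis then shows that every simplified edge intersecting $B$ contributes at least $\ed$ length to $T'$ within $B(c,R+2\ed)$: if one of its endpoints lies in $B(c,R+\ed)$, then its initial $\ed$-prefix lies within $B(c,R+2\ed)$; otherwise both endpoints lie outside $B(c,R+\ed)$ and the edge crosses that ball as a chord passing through the concentric $B(c,R)$, with length at least $2\sqrt{2R\ed+\ed^2}\geq 2\ed$. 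Dividing the packedness bound by $\ed$ gives $O(cR/\ed) = O(c/\e)$ edges per choice of $(a,b)$, and summing over the at most two candidate $(a,b)$'s completes the bound. The main difficulty is the chord-clipping subcase, where an unbounded number of edges might merely graze $B$ without having endpoints nearby; the chord-length estimate precisely neutralises this concern.
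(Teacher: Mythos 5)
Your proof is correct and follows the same approach as the paper's: map the strip $H$ to an $O(\ed)$-long arc on the simplification $T'$, thicken it by $(1+\ehat)d$ (your ball $B(c,R)$ plays the role of the paper's stadium $S_{u'} = u' \oplus B(0,(1+\ehat)d)$), and count the intersecting edges of $T'$ via its $6c$-packedness, charging each such edge at least $\ed$ of arc length. You fill in explicitly the contraction property of $f_{T,\ed}$ and the chord-clipping subcase for edges that graze the ball without endpoints nearby, which the paper delegates with a citation to \cite[Lemma~4.4]{DriemelEtAl2012ApproximatingFrechetDistance}.
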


\begin{proof}
Let $T'$ be the $\ed$-simplification of $T$, and let $T_{uv}$ simplifies into segment $(u, v) \in T'$. Let $u' \subseteq (u, v)$ be a small part that is at least $\ed$ long. Let $S_{u'} = u' \oplus B(0, (1 + \ehat)d)$. 
  
Using similar construction, and arguments of \cite[Lemma 4.4]{DriemelEtAl2012ApproximatingFrechetDistance}, one can prove that at most $O(c/\e)$ segments in $T'$ intersects $S_{u'}$. Based on the construction of the simplified free space $\mathcal{D}'_{(1 + \ehat)d}(T, T)$, a point $(x, y) \in \mathcal{D}'_{(1 + \ehat)d}(T, T)$ is white if and only if $dist(f_{T, \ehat d}(T(x)), f_{T, \ehat d}(T(y))) \leq (1 + \ehat)d$. As such, at most $O(c/\e)$ aggregated cells have simplified free space intersecting $H$.
\end{proof}


Next, bound the construction time and space complexity of the directed graph in~\cite{GudmundssonWong2022Cubicupperlower}. 


\begin{lemma}
\label{lemma:constructing_G}
Given a $c$-packed trajectory $T$ of complexity $n$, constructing $G$ for the simplified free space diagram $\mathcal{F}'_{(1 + \ehat)d}(T, T)$ takes $O((cn/\varepsilon) \log{(n/\varepsilon)})$ time. $G$ has $O(cn/\varepsilon)$ nodes and $O((cn/\varepsilon) \log{(c/\varepsilon)})$ edges.
\end{lemma}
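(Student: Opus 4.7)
The plan is to transport Gudmundsson and Wong's graph construction from the original free space diagram to the simplified free space diagram $\mathcal{F}'_{(1 + \hat{\varepsilon})d}(T, T)$ and then re-derive the node, edge, and time bounds using the improvements gained from $c$-packedness. Recall that in the original construction, $G$ has a node for each critical point on the horizontal boundaries of the free space diagram, and edges encode monotone reachability; for each horizontal boundary a balanced range tree is built over its critical points, so a greedy lowest-monotone-path search advances in logarithmic time per step. First I would observe that in the simplified diagram the critical points live on the boundaries of the $O(cn/\varepsilon)$ aggregated non-empty cells provided by Lemma~\ref{lem:free_space_diagram_construction}, which immediately bounds the number of nodes of $G$ by $O(cn/\varepsilon)$.

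Next I would bound the edges, which is the step where $c$-packedness buys the most. Each range tree is built over the critical points sitting on a single horizontal boundary of a simplified cell, and such a boundary is contained in a horizontal strip whose $y$-span is the length of one simplified segment, hence at least $\varepsilon d$. By Lemma~\ref{lemma:strip_limit} such a strip is crossed by $O(c/\varepsilon)$ aggregated non-empty cells, so each range tree has $O(c/\varepsilon)$ leaves and contributes $O((c/\varepsilon)\log(c/\varepsilon))$ internal edges. Since $T'$ has at most $n$ vertices, there are $O(n)$ horizontal boundaries and hence $O(n)$ range trees, giving a total of $O((cn/\varepsilon)\log(c/\varepsilon))$ edges in $G$, as claimed.

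For the running time I would break the construction into three stages. Stage~one builds $\mathcal{F}'_{(1+\hat{\varepsilon})d}(T,T)$ with its four-directional cell adjacency structure in $O((cn/\varepsilon)\log(cn/\varepsilon))$ time by Lemma~\ref{lem:free_space_diagram_construction}. Stage~two sorts the $O(c/\varepsilon)$ critical points on each of the $O(n)$ horizontal boundaries and builds a balanced range tree on each, for a total of $O((cn/\varepsilon)\log(c/\varepsilon))$ time. Stage~three inserts the $O((cn/\varepsilon)\log(c/\varepsilon))$ edges of $G$ in constant amortised time per edge. Summing these and using $\log(cn/\varepsilon) = O(\log(n/\varepsilon))$ (which is valid whenever $c$ is polynomially bounded in $n$, the regime of interest) gives the advertised $O((cn/\varepsilon)\log(n/\varepsilon))$ running time.

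The main obstacle I anticipate is justifying cleanly that Lemma~\ref{lemma:strip_limit} can be applied boundary-by-boundary with the strip width $\varepsilon d$: this hinges on the invariant from Section~\ref{sec:free_space_diagram} that every segment of the simplified trajectory has length at least $\varepsilon d$, which in turn guarantees that each horizontal boundary of a simplified cell sits inside a qualifying strip. Once this is in place, the rest of the argument is bookkeeping that replaces one factor of $n$ in the Gudmundsson--Wong counts with $c/\varepsilon$ while preserving the remaining factor of $n$ from the number of simplified segments, and the link-cut tree layer of their data structure can be carried over verbatim since it does not depend on the number of critical points per boundary.
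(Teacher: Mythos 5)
Your proposal is essentially the same argument as the paper's: both bound the nodes of $G$ by the $O(cn/\varepsilon)$ non-empty aggregated cells of the simplified diagram, both invoke Lemma~\ref{lemma:strip_limit} (with the observation that each row of aggregated cells corresponds to a simplified segment of length at least $\varepsilon d$, hence a strip of width at least $\varepsilon d$) to cap the range-tree leaves at $O(c/\varepsilon)$, and both conclude the $O((cn/\varepsilon)\log(c/\varepsilon))$ edge bound from each critical point connecting to $O(\log(c/\varepsilon))$ range-tree nodes. The only soft spot is your stage three: asserting \emph{constant amortised} time per edge quietly skips the per-$p_i$ work of locating $q_i$ (the rightmost reachable critical point on the top boundary), which is a binary-search-type query costing $O(\log n_k)$ per $p_i$; the paper folds this explicitly into the $\sum_j n_k \log n_k$ bound. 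Since that sum is still $O((cn/\varepsilon)\log(n/\varepsilon))$ and your stage one already contributes that much, your final bound is unaffected, but the per-edge accounting as stated is imprecise.
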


\begin{proof}
Let $n_k$ be the number of non-empty aggregated cells in the $j$th row in $\mathcal{F}'_{(1 + \ehat)d}(T, T)$. Construction of the range tree for the top (resp. right) boundary of a row (resp. column) takes $O(n_k \log{n_k})$ time~\cite{debergComputationalGeometryAlgorithms2008}. For all $p_i$, finding $q_i$ takes $O(n_k \log{n_k})$ time
and recall that there are $O(cn/\varepsilon)$ critical points in $\mathcal{F}'_{(1 + \ehat)d}(T, T)$. The total construction time is as follows.

\begin{align*}
     \sum_{j=0}^{n + 1} n_k \log{n_k} \leq 
        \log{\left(\frac{cn}{\varepsilon}\right)} \sum_{j=0}^{n + 1} n_k = 
         \log{\left(\frac{cn}{\varepsilon}\right)} O\left(\frac{cn}{\varepsilon}\right) \in 
        O\left(\left(\frac{cn}{\varepsilon}\right) \log{\left(\frac{n}{\varepsilon}\right)}\right)
\end{align*}

By Corollary~\ref{cor:fewer_cells}, the simplified free space diagram has $O(cn/\varepsilon)$ non-empty aggregated cells, therefore $G$ has $O(cn/\varepsilon)$ nodes. In a range tree, given a continuous interval $[q_k, q_i]$, one can find $O(\log{n})$ nodes such that these nodes include $[q_k, q_i]$ in their canonical subset, where $n$ is the total number of items in the leaves \cite{debergComputationalGeometryAlgorithms2008}. There are at most $O(c/\varepsilon)$ nodes on a horizontal or vertical boundary by Lemma~\ref{lemma:strip_limit}, and each critical point $p_i$ on a vertical (resp. horizontal) cell boundary connects to $O(\log(c/\varepsilon))$ nodes, therefore the total number of edges is $O((cn/\varepsilon) \log(c/\varepsilon))$.
\end{proof}

\subsection{Storing and Reusing Pre-computed Paths}
\label{sec:linkcut_tree}
A link-cut tree \cite{sleatorDataStructureDynamic1983} maintains a forest that allows the link/cut operations of subtrees in $O(\log{n})$ amortised time. In addition, a link-cut tree allows finding the root of a node in $O(\log{n})$ amortised time. The algorithm by Gudmundsson and Wong \cite{GudmundssonWong2022Cubicupperlower} used a link-cut tree to store and re-use monotone paths. Consider when a sweepline, either $l_s$ or $l_t$, stops at a new critical point $p$. Instead of recomputing the monotone paths, they need only to add $p$ to the existing link-cut tree they maintained in the previous instances. 

With graph $G$ defined, we can analyse the total running time of the algorithm by Gudmundsson and Wong \cite{GudmundssonWong2022Cubicupperlower} on the simplified free space diagram. The key to observe the running time is that in their algorithm, if an edge leads to a dead-end, it is marked and will not be used in future searches. Furthermore, inserting or removing an edge takes $O(\log n)$ amortised time in a link-cut tree.

\vertextovertex*

\begin{proof}
\sloppy Construction of the simplified free space diagram takes $O(\constructFdComplexity)$ time by Theorem~\ref{thm:free_space_summarised}. Construction of $G$ takes $O((cn/\varepsilon)\log{(n/\varepsilon)})$ time by Lemma~\ref{lemma:constructing_G}. The graph~$G$ has at most $O(nm\log(c/\varepsilon))$ edges, see Appendix~\ref{appendix:greedy_crit_points}. Gudmundsson and Wong \cite{GudmundssonWong2022Cubicupperlower} showed that an edge is added to and removed from the link-cut tree at most once, and adding/removing an edge from the link-cut tree takes $O(\log(n/\varepsilon))$ time since the maximum number of nodes in the link-cut tree is upperbounded by the number of nodes in $G$. Therefore maintaining the link-cut tree takes $O(nm\log(c/\varepsilon)\log(n/\varepsilon))$ time.
\end{proof}

\section{Reference trajectory is arbitrary}
\label{sec:arbitrary_ref_trajectory}
Our results in this section rely heavily on the work of Gudmundsson and Wong~\cite{GudmundssonWong2022Cubicupperlower}. Due to space constraints, we can only highlight important parts of their algorithm, and the analysis of our improvements. 

When the reference trajectory is arbitrary, a monotone path can start and finish at arbitrary positions in the non-empty cells. Therefore, in addition to the critical points in the free space diagram and the greedy critical points, Gudmundsson and Wong defined three new types of \textit{internal critical points} \cite[Definition~25]{GudmundssonWong2022Cubicupperlower}. An internal critical point must lie in the interior of a non-empty cell, and lie on the boundary of the free space. They made the following distinction (see Figure~\ref{fig:internal_critical_points}).

\begin{enumerate}
    \item End-of-cell critical point: the leftmost and rightmost white point of a non-empty cell.
    \item Propagated critical point: a point on the boundary of the free space that shares a $y$-coordinate with a critical point.
    \item $l$-apart critical points: two points on the boundaries of free space that are a distance of $l$ apart horizontally.
\end{enumerate}

\begin{figure}[bth]
    \centering
    \includegraphics[width=\textwidth]{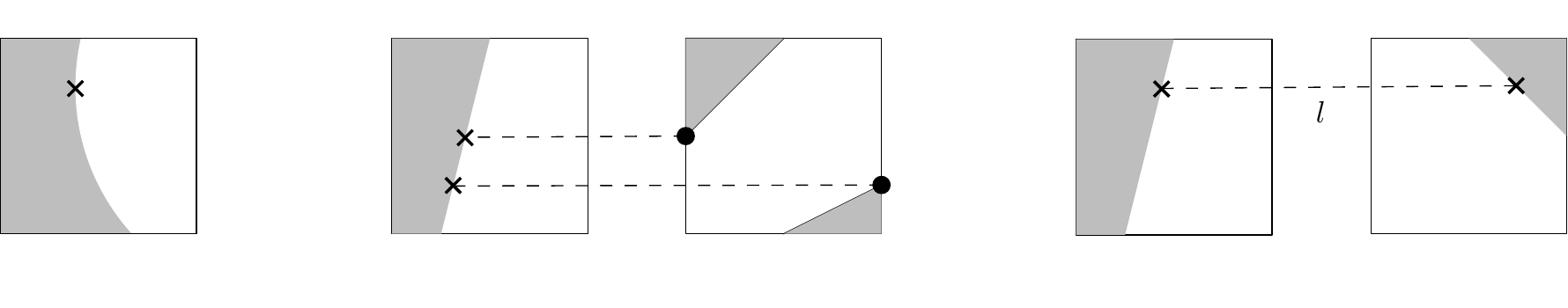}
    \caption{The three types of internal critical points are illustrated using a cross in the left, middle, and right figures, respectively. From left to right, they are the end-of-cell critical points (left), propagated critical points (middle) and $l$-apart critical points (right).}
    \label{fig:internal_critical_points}
\end{figure}

There could be an infinite number of $l$-apart critical points in a pair of non-empty cells. However, if this is the case, we can simply perturb the input by a miniscule amount so that there are no longer an infinite number of  $l$-apart critical points. See Appendix~\ref{appendix:l_apart} for an example and a figure. Therefore, for the rest of the paper, we can assume that there are at most a constant number of $l$-apart critical points per pair of cells.


We will first bound the number of internal critical points and the time it takes to compute them. One can compute the end-of-cell and $l$-apart critical points in linear time with respect to the number of non-empty cells since there are at most a constant number of them per pair of cells. In Lemma~\ref{lemma:strip_limit}, we showed that in a narrow horizontal strip, only a small number of cells intersect free space. An output-sensitive query algorithm would be efficient to find the non-empty cells that a critical point $p$ propagates to. Therefore, we can use an interval tree~\cite{debergComputationalGeometryAlgorithms2008} to store the $y$-spans of all non-empty cells in a row, and query the intersecting intervals of $y(p)$ in logarithmic time. We formalise the above arguments in the below Lemma~\ref{lemma:number_of_internal_crit_points}.

\begin{lemma}
\label{lemma:number_of_internal_crit_points}
Assume that there is a constant number of $l$-apart critical points per pair of cells, it takes $O(cn\log(n/\varepsilon) + c^2n/\varepsilon^2)$ time to compute $O(c^2n/\varepsilon^2)$ internal critical points in the simplified free space diagram $\mathcal{F}'_{(1 + \ehat)d}(T, T)$.
\end{lemma}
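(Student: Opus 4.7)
The plan is to count and compute the three kinds of internal critical points --- end-of-cell, $l$-apart, and propagated --- separately, and then to sum the counts and running times. By Corollary~\ref{cor:fewer_cells} the simplified diagram has $O(cn/\varepsilon)$ non-empty aggregated cells, so by Lemma~\ref{lemma:cells_constant} the two end-of-cell points per cell can be read off in $O(cn/\varepsilon)$ total time. For the $l$-apart points, I would walk the non-empty cells in the horizontal order produced by Lemma~\ref{lem:free_space_diagram_construction} and, for each cell $C$, locate cells whose $x$-span contains $x(C)\pm l$ by a binary search at cost $O(\log(n/\varepsilon))$; only cells in the vertical strip of $C$ can pair with it, and by Lemma~\ref{lemma:strip_limit} that strip holds $O(c/\varepsilon)$ cells, so together with the constant-per-pair hypothesis the count is $O(cn/\varepsilon)$ and the search cost is $O((cn/\varepsilon)\log(n/\varepsilon))$.

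The propagated points will dominate. There are $O(cn/\varepsilon)$ critical points in the diagram, and for each such point $p$ I would look up the non-empty cells whose $y$-span contains $y(p)$; the free-space boundary meets the horizontal line $y=y(p)$ inside each such cell precisely at the propagated points. The row containing $y(p)$ has vertical extent at least $\varepsilon d$, since each row of the simplified diagram corresponds to a segment of the $\varepsilon d$-simplification of $T$, so Lemma~\ref{lemma:strip_limit} applies and bounds the hit cells per query by $O(c/\varepsilon)$, which yields $O(c^2 n/\varepsilon^2)$ propagated points overall. To execute the lookups, I would maintain a per-row interval tree~\cite{debergComputationalGeometryAlgorithms2008} on the $y$-spans of the cells in that row; across all rows the trees are built in $O((cn/\varepsilon)\log(n/\varepsilon))$ time, and each stabbing query costs $O(\log(n/\varepsilon)+k)$ where $k$ is the query's output size. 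Adding the three subtotals matches the claimed bound.

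The main obstacle I expect is aligning Lemma~\ref{lemma:strip_limit} with the single-$y$-coordinate nature of propagation: the lemma's hypothesis demands a horizontal strip of width at least $\varepsilon d$, whereas a propagation query lives on a single horizontal line. The fix is to work row-by-row, exploiting the fact that a row of the simplified diagram has height at least $\varepsilon d$ because it is the span of a simplified segment, so any horizontal line contained in that row hits at most the $O(c/\varepsilon)$ non-empty cells guaranteed by the lemma. A secondary subtlety is making sure the binary-search structure used for $l$-apart pairing is consistent with the horizontal/vertical neighbour connectivity already provided by Lemma~\ref{lem:free_space_diagram_construction}, which I believe comes for free from the sort-by-horizontal-coordinate step in its construction.
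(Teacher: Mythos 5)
Your proposal is correct and follows essentially the same route as the paper: count the three kinds of internal critical points separately, observe that propagated points dominate at $O(c^2 n/\varepsilon^2)$ via Lemma~\ref{lemma:strip_limit}, and organise the per-row stabbing queries with interval trees. One small divergence: for the $l$-apart points the paper sweeps a window of width $l$ across each row (a two-pointer walk that makes the $O(cn/\varepsilon)$ count immediate), whereas you do a per-cell binary search; your ``vertical strip holds $O(c/\varepsilon)$ cells'' reasoning would, taken literally, only give $O(c^2 n/\varepsilon^2)$ $l$-apart points and $O((cn/\varepsilon)\log(n/\varepsilon)+c^2n/\varepsilon^2)$ search cost, but since the final bound is dominated by the propagated points either way, this looseness is harmless. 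Your explicit note that Lemma~\ref{lemma:strip_limit} applies to the single line $y=y(p)$ because the row it lies in is an aggregated cell of height at least $\varepsilon d$ is a subtlety the paper leaves implicit, and it is the right justification.
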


\begin{proof}
There are $O(cn/\varepsilon)$ non-empty aggregated cells in $\mathcal{F}'_{(1 + \ehat)d}(T, T)$, or non-empty cells for short, and $O(cn/\varepsilon)$ end-of-cell critical points in $\mathcal{F}'_{(1 + \ehat)d}(T, T)$. Each critical point propagates $O(c/\varepsilon)$ times by Lemma~\ref{lemma:strip_limit}, therefore there are $O(c^2 n/\varepsilon^2)$ propagated critical points. We can charge a cell with a constant number of $l$-apart critical points. Therefore, there are at most $O(cn/\e)$ $l$-apart critical points. In total, there are $O(c^2 n/\varepsilon^2)$ internal critical points.

One can compute the end-of-cell critical points by iterating through the free space diagram in $O(cn/\varepsilon)$ time. To compute the $l$-apart critical points, we can start from the first non-empty cell $C$ in a row and find the first cell that is $l$-apart from $C$, and solve a constant number of quadratic equations. We can then slide this $l$-apart line and do the same for all pairs of cells that are $l$-apart in all rows in $O(cn/\varepsilon)$ time in total. 

To compute the propagated critical points, we construct an interval tree \cite{debergComputationalGeometryAlgorithms2008} for each row in $\mathcal{F}'_{(1 + \ehat)d}(T, T)$ to store the maximum and minimum $y$-coordinates of the free space in the non-empty cells. Let $n_i$ be the number of non-empty cells in the $i$th row. We can sum the construction time of the interval trees.
\begin{align*}
    \sum_{i = 1}^{n} n_i \log n_i \leq \frac{cn}{\varepsilon} \log \left( \frac{cn}{\varepsilon} \right) \in O\left( \left( \frac{cn}{\varepsilon} \right) \log \left( \frac{cn}{\varepsilon} \right) \right) 
\end{align*}

Given a critical point $p$ in the $i$th row, one can query the interval tree in $O(\log{n_i} + c/\varepsilon) \in O(\log n + c/\varepsilon)$ time to compute the propagated critical points from $p$ using Lemma~\ref{lemma:strip_limit} and \cite{debergComputationalGeometryAlgorithms2008}. With $O(cn/\varepsilon)$ critical points, computing the propagated critical points takes $O(cn\log(n)/\varepsilon + c^2n/\varepsilon^2)$ time.
\end{proof}

With the additional internal critical points, the number of reference trajectories and the number of greedy critical points increases. We can use the algorithm in the previous section, and obtain the following result.

\begin{restatable}{lemma}{arbitraryRefNoIntervalManagement}
\label{lemma:arbitrary_reference_no_interval_management}
\sloppy There is an $O((c^2m n/\varepsilon^2)\log(c/\varepsilon)\log(n/\varepsilon))$ time algorithm that solves $\mathtt{SC(T, m, l, (1 + \varepsilon)d)}$ in the case that the reference trajectory is arbitrary.
\end{restatable}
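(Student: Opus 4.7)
The plan is to adapt the vertex-to-vertex algorithm underlying Theorem~\ref{thm:vertex_to_vertex} by letting the sweeplines $l_s$ and $l_t$ stop at every internal critical point of $\mathcal{F}'_{(1+\ehat)d}(T,T)$, not only at the vertex-based critical points on $T$. Since the endpoints of any arbitrary reference subtrajectory lie at one of these (internal or ordinary) critical points, this reduces the arbitrary case to exactly the sweep framework already analysed in Section~\ref{sec:vertex_to_vertex}.

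First I would preprocess: build $\mathcal{F}'_{(1+\ehat)d}(T,T)$ by Theorem~\ref{thm:free_space_summarised}, enumerate the $O(c^2 n/\varepsilon^2)$ internal critical points by Lemma~\ref{lemma:number_of_internal_crit_points}, and build the directed graph $G$, the range trees on cell boundaries, and the link-cut tree exactly as in Section~\ref{sec:graph_construction_new}. The crucial structural property carries over unchanged: by Lemma~\ref{lemma:strip_limit}, each horizontal or vertical $\ed$-wide strip still intersects only $O(c/\varepsilon)$ aggregated cells, so each range tree still has $O(c/\varepsilon)$ leaves and each critical point still contributes $O(\log(c/\varepsilon))$ edges to $G$. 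The only change from the vertex-to-vertex setting is that the set of stopping positions is now of size $O(c^2 n/\varepsilon^2)$ instead of $O(n)$.

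By the greedy critical point analysis of Appendix~\ref{appendix:greedy_crit_points}, the total number of edges ever inserted into $G$ during the whole sweep is $O(K m \log(c/\varepsilon))$, where $K$ is the number of candidate reference trajectory endpoints. Substituting $K = O(c^2 n/\varepsilon^2)$ yields $O((c^2 mn/\varepsilon^2)\log(c/\varepsilon))$ edges. Each edge is added to and removed from the link-cut tree at most once, at amortised cost $O(\log(n/\varepsilon))$ per operation since the tree contains at most $O(cn/\varepsilon)$ nodes. Summing gives $O((c^2 mn/\varepsilon^2)\log(c/\varepsilon)\log(n/\varepsilon))$ time for the link-cut tree maintenance, which dominates both the free-space construction time from Theorem~\ref{thm:free_space_summarised} and the internal-critical-point enumeration time from Lemma~\ref{lemma:number_of_internal_crit_points}.

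The step I expect to be the main obstacle is confirming that the greedy-critical-point edge bound of Gudmundsson and Wong~\cite{GudmundssonWong2022Cubicupperlower} really does scale linearly in the number $K$ of reference trajectory positions when those positions are internal critical points rather than cell-boundary critical points. Concretely, one must verify that the dead-end marking invariant underlying the amortised link-cut tree analysis — that a once-marked edge is never re-explored — continues to hold when monotone paths may begin and end strictly inside cells, and that the $\log(c/\varepsilon)$ branching factor per critical point still arises from the same range-tree decomposition on each cell boundary.
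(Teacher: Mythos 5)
Your proposal follows essentially the same route as the paper: enumerate the $O(c^2 n/\varepsilon^2)$ internal critical points via Lemma~\ref{lemma:number_of_internal_crit_points}, feed them into the vertex-to-vertex sweep machinery, bound the greedy critical points by $O(Km)$ with $K=O(c^2n/\varepsilon^2)$, obtain $O((c^2 mn/\varepsilon^2)\log(c/\varepsilon))$ edges in $G$, and charge each edge $O(\log(n/\varepsilon))$ link-cut-tree cost. One small bookkeeping slip: you state the link-cut tree has $O(cn/\varepsilon)$ nodes, but once internal and greedy critical points are inserted into $G$ the node count can grow to $O(c^2 mn/\varepsilon^2)$; this does not change the $O(\log(n/\varepsilon))$ per-operation bound. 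The obstacle you flag at the end (that monotone paths may begin and end strictly inside cells) is real and is what the paper resolves via Lemma~\ref{lemma:find_qi}, which extends the ``rightmost reachable critical point'' query from cell-boundary points to arbitrary free points in $O(\log n_k)$ time; with that lemma in hand the dead-end marking invariant and the $O(\log(c/\varepsilon))$ range-tree branching carry over unchanged.
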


\begin{proof}
See Appendix~\ref{appendix:arbitrary_reference_no_interval_management}.
\end{proof}

\subsection{Improve Further with an Interval Management Data Structure}
The bottleneck in the above Lemma~\ref{lemma:arbitrary_reference_no_interval_management} is operating the outgoing edges of the $O(c^2mn/\varepsilon^2)$ greedy critical points, which are generated from $O(c^2n/\varepsilon^2)$ propagated critical points. To avoid computing the greedy critical points, Gudmundsson and Wong \cite{GudmundssonWong2022Cubicupperlower} used a dynamic monotonic interval data structure \cite{GavruskinEtAl2015Dynamicalgorithmsmonotonic} to store overlapping monotonic intervals that represent the $y$-spans of monotone paths between $l_s$ and $l_t$. Instead of searching for a set of monotone paths between each window greedily, they showed that one can update and query the interval data structure to retrieve $m - 1$ non-overlapping intervals, all in $O(\log n)$ amortised time. 

\arbitraryreference*

\begin{proof}
Constructing the simplified free space diagram takes $O(\constructFdComplexity)$ time by Theorem~\ref{thm:free_space_summarised}. Computing and sorting the internal critical points takes $O((c^2n/\varepsilon^2)\log(n/\varepsilon))$ time by Lemma~\ref{lemma:number_of_internal_crit_points}. There are $O((c^2n/\varepsilon^2)\log(c/\varepsilon))$ edges in total by Lemma~\ref{lemma:strip_limit}, and each edge takes $O(\log(n/\varepsilon))$ time to insert or remove since there are at most $O(c^2 n/\varepsilon^2)$ nodes in the link-cut tree. In total, we spend $O((c^2n/\varepsilon^2)\log(c/\varepsilon)\log(n/\varepsilon))$ time to maintain the edges in $G$. 

Each internal critical point is treated as an event, and maintaining the interval data structure takes $O(\log n)$ amortised time per event point (see \cite[Theorem~2]{GudmundssonWong2022Cubicupperlower}), and thus $O((c^2n/\varepsilon^2)\log n)$ in total. The overall complexity is dominated by maintaining the edges.
\end{proof}

\section{Conclusion}
We presented an algorithm that solves the subtrajectory cluster problem on $c$-packed trajectories $T$ with an approximation error on the Fréchet distance, achieving an \finalComplexity time complexity. Our algorithm builds upon the near-optimal algorithm proposed by Gudmundsson and Wong~\cite{GudmundssonWong2022Cubicupperlower}, but with significant improvements. By carefully analysing the properties of $c$-packed trajectories, we have shown that important parameters such as the number of propagated critical points are significantly lower than the theoretical $O(n)$ upperbound for realistic trajectories. As a result, our algorithm improves upon the near-optimal algorithm by replacing a factor of $n^2$ with $c^2/\varepsilon^2$, leading to more efficient subtrajectory cluster of realistic trajectories.
\newpage
\bibliographystyle{plain}
\bibliography{aaa}


\newpage

\appendix

\section[The Directed Graph of Gudmundsson and Wong]{The Directed Graph of Gudmundsson and Wong~\cite{GudmundssonWong2022Cubicupperlower}}
\label{appendix:graph_construction}
In this section, we will discuss the construction of a directed graph to store candidate monotone paths, as proposed by Gudmundsson and Wong~\cite{GudmundssonWong2022Cubicupperlower}. They defined a specific type of monotone path that exists in a row or column only, and showed that any general monotone path $P$ can be decomposed into a series of these so-called \textit{basic monotone paths} $P'$, such that the $y$-span of $P$ is a subset of the $y$-span of $P'$. The basic monotone paths are stored in a directed graph, which can be efficiently queried to find feasible monotone paths.

\begin{definition}[{\cite{GudmundssonWong2022Cubicupperlower}}]
A basic monotone path is a monotone path that is contained entirely in a single row or column of the free space diagram, starting at a critical point on a vertical cell boundary, and ending on a critical point on a horizontal cell boundary, or vice versa. 
\end{definition}

Gudmundsson and Wong~\cite[Lemma~16]{GudmundssonWong2022Cubicupperlower} showed that there is a monotone path from critical point $a$ to $b$ on the free space diagram if and only if there is a sequence of basic monotone paths between $a$ and $b$. Their idea is to decompose a monotone path into path $p_1p_2...p_k$ such that $p_1 = a$, $p_k = b$, and the path from $p_i$ to $p_{i + 1}$ is a basic monotone path. One can first transform a monotone path $Q$ into a set of almost-basic monotone paths $q_1...q_k$ inductively: if $q_i$ lies on a vertical (resp. horizontal) cell boundary, then $q_{i + 1}$ is the next intersection of $Q$ with a horizontal (resp. vertical) boundary. Then one can transform the path $q_1...q_k$ into a series of basic monotone path as follows. If $q_i$ lies on a vertical (resp. horizontal) cell boundary, then $p_i$ is the critical point below (resp. left of) $q_i$.

\begin{fact}[{\cite[Lemma~16]{GudmundssonWong2022Cubicupperlower}}]
Given a pair of critical points $a$ and $b$ in the free space diagram, there is an $ab$ monotone path if and only if there is a sequence of basic monotone paths $p_1...p_k$ such that $p_i$ is a critical point for $1 \leq i \leq k$, $p_1 = a$, and $p_k = b$.
\end{fact}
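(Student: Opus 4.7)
The plan is to prove the two directions of the equivalence separately, with the forward direction (monotone path exists $\Rightarrow$ sequence of basic monotone paths exists) requiring most of the work.

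For the backward direction, suppose we have a sequence of basic monotone paths $p_1, \ldots, p_k$ with $p_1 = a$ and $p_k = b$. Each basic monotone path is monotone by definition, and consecutive basic paths share the endpoint $p_i$, so concatenating them yields a continuous $xy$-monotone path from $a$ to $b$ lying entirely in the free space. The only care needed is to check that concatenation at $p_i$ does not violate monotonicity, which follows because the $(i{-}1)$-th path ends at $p_i$ and the $i$-th path starts at $p_i$, and both are monotone in the same direction.

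For the forward direction, I follow the two-step construction already sketched in the excerpt. Given an $xy$-monotone path $Q$ from $a$ to $b$ in the free space, first define the ``almost-basic'' decomposition $q_1, \ldots, q_k$ inductively: set $q_1 = a$, and given $q_i$ lying on a vertical (resp. horizontal) cell boundary, let $q_{i+1}$ be the next point at which $Q$ meets a horizontal (resp. vertical) boundary. By construction, the sub-path of $Q$ from $q_i$ to $q_{i+1}$ is confined to a single row or column. Second, define the critical-point sequence $p_i$ by shifting: if $q_i$ sits on a vertical cell boundary, let $p_i$ be the critical point directly below $q_i$ on that boundary (the lower endpoint of the free-space interval containing $q_i$); if $q_i$ sits on a horizontal cell boundary, let $p_i$ be the critical point directly to the left of $q_i$. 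Set $p_1 = a$ and $p_k = b$, which are already critical points by assumption.

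The main obstacle is step three: verifying that, after shifting, the points $p_i, p_{i+1}$ are actually joined by a \emph{basic} monotone path lying in the free space of a single row or column. I would use the standard fact that within a single cell, the free space is the intersection of an ellipse with the cell, hence convex, and the intersection of free space with any cell-boundary edge is a single interval. These two properties together imply that whenever there is a monotone path from $q_i$ to $q_{i+1}$ through free space within a single row, there is also a monotone path from the critical point $p_i$ (directly below $q_i$ on a vertical boundary) to $q_{i+1}$, and then further to $p_{i+1}$ (directly left of $q_{i+1}$ on a horizontal boundary), because moving down a vertical boundary or left along a horizontal boundary stays inside the free-space interval. The argument is a cell-by-cell induction along the row: at each internal vertical boundary the path can be ``pushed down'' to the lower envelope of the free space, maintaining reachability thanks to the convex-in-cell structure. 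This reduces the sub-path $q_i$-to-$q_{i+1}$ to a basic monotone path $p_i$-to-$p_{i+1}$, completing the forward direction and hence the equivalence.
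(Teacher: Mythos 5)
Your proposal follows essentially the same route as the paper, which states this as a Fact imported from Gudmundsson and Wong (Lemma~16) and sketches exactly your construction: decompose the monotone path into almost-basic pieces $q_1\ldots q_k$ alternating between vertical and horizontal boundaries, shift each $q_i$ to the critical point below (resp.\ left of) it, and use convexity of the free space within each cell to certify the resulting basic monotone paths, with the backward direction being trivial concatenation. One small caution: the hop to $p_{i+1}$ cannot be realized by literally moving left along the horizontal boundary after reaching $q_{i+1}$ (that would break monotonicity); instead, as your cell-by-cell induction implicitly does, one reroutes inside the last cell from the entry point on its left vertical boundary straight to $p_{i+1}$, which is monotone and stays in free space by convexity.
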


With basic monotone paths defined, we want to construct a graph to store all possible basic monotone paths. We will do so row-by-row and column-by-column. For an arbitrary row, where $H$ is the top boundary, let $p_i$ be the bottom-most critical point of the $i$th vertical cell boundary, where $0 \leq i \leq n + 1$. A brute-force approach is to connect every $p_i$ to every critical point $q$ on $H$ such that there is a $p_iq$ basic monotone path. However, the number of edges is cubic in this case. 

Define $q_i$ to be the rightmost critical point on the top boundary such that there is a $p_iq_i$ basic monotone path. A key observation is that if there is a $p_iq_i$ monotone path, then there is a $pq$ monotone path where $q$ is a critical point on $H$, $q$ is to the right of $p_i$ and to the left of $q_i$. Indeed, let $r$ be the intersection of the $p_iq_i$ monotone path with the left vertical cell boundary of the cell that contains $r$. Since the interior of a non-empty cell is convex, there is a $pr$ monotone path.

The above observation enables Gudmundsson and Wong to define a graph $G = (V, E)$ to efficiently store all possible basic monotone paths \cite{GudmundssonWong2022Cubicupperlower}. For an arbitrary row, they first construct a range tree $RT$ storing the critical points on $H$ with respect to their increasing $x$-coordinates. Then, they connect $p_i$ to the node $v$ in $RT$ as long as there is a basic monotone path from $p_i$ to every critical point in the leaves of $v$ (see Figure~\ref{fig:range_tree}). Each column of the free space diagram is processed analogously. As we will use the range tree extensively in the following section, we define the canonical subset to differentiate from the canonical squares defined in the previous section. Given a node $v$ in a range tree, \textit{the canonical subset} of $v$ is the set of points stored in the leaves of $v$ \cite{debergComputationalGeometryAlgorithms2008}.


\begin{figure}[!htb]
    \centering
    \includegraphics[scale=0.8]{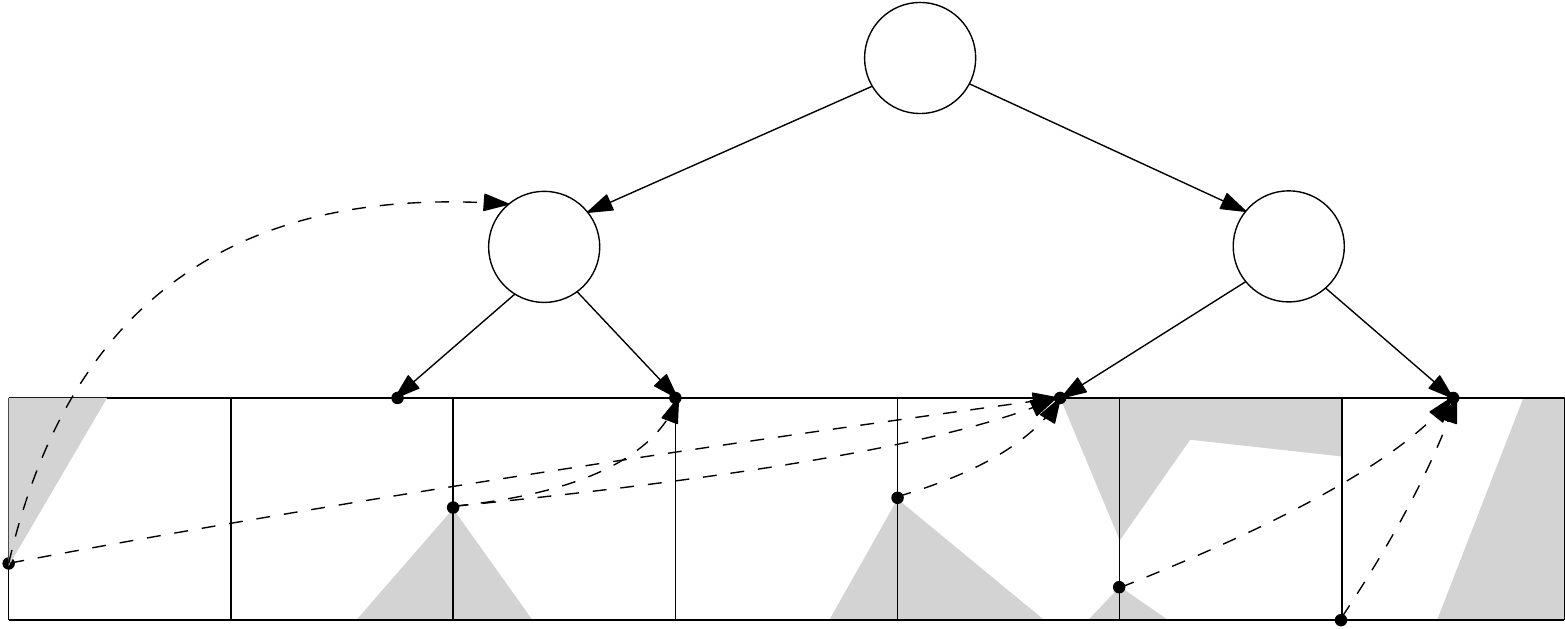}
    \caption{Part of graph $G$. We build a directed range tree from the critical points on the top boundary. A bottom critical point $p$ on a vertical cell boundary is connected to a node $v$ of the range tree as long as there is a basic monotone path from $p$ to every critical point in the canonical subset of $v$, which are illustrated with dashed lines.}
    \label{fig:range_tree}
\end{figure}

Gudmundsson and Wong~\cite{GudmundssonWong2022Cubicupperlower} described an algorithm that find $q_i$ for each $p_i$, However, we cannot use their algorithm for two reasons. First, their algorithm only works on the very restrictive case when all non-empty cells have critical points on their top boundary, which is not true in the free space diagram of two general curves. Second, their algorithm only works when all cells in a row are non-empty. 

We will now show how to find $q_i$ for each $p_i$. Let $r_i$ be the top critical point on a vertical cell boundary. For some $p_i$ on a $i$th vertical cell boundary, we can draw a horizontal line to the right from $p_i$. If this horizontal line is blocked first immediately before the the $j$th vertical cell boundary and $y(p_i) \leq y(p_j)$ (resp. $y(p_i) \geq y(r_j)$), we say $p_i$ is blocked by $p_j$ (resp. $r_j$).

There are two things that can happen (see Figure~\ref{fig:pi_blocked}). One, the line is blocked by some point $b$ on the boundary between free and non-free space in the cell immediately to the left of $r_j$ or $p_j$. $b$ is either higher than $r_j$ or lower than $p_j$. If $b$ is below some $p_j$, then $q_i = q_j$ since there is a $p_ip_jq_j$ monotone path. If $b$ is above some $r_j$, then $q_i$ is simply the rightmost critical point on $H$ that is to the left of $r_j$ and to the right of $p_i$. Second, the line from $p_i$ is blocked at point $b$ by the non-free space that completely separates two adjacent cells or it is blocked by the right boundary of the free space diagram. Say $b$ lies between the $j$th and the $(j-1)$th vertical cell, then $q_i$ is simply the rightmost critical point on $H$ that is to the left of the $j$th vertical cell boundary and to the right of $p_i$. 

\begin{figure}[!htb]
    \centering
    \includegraphics[scale=0.8]{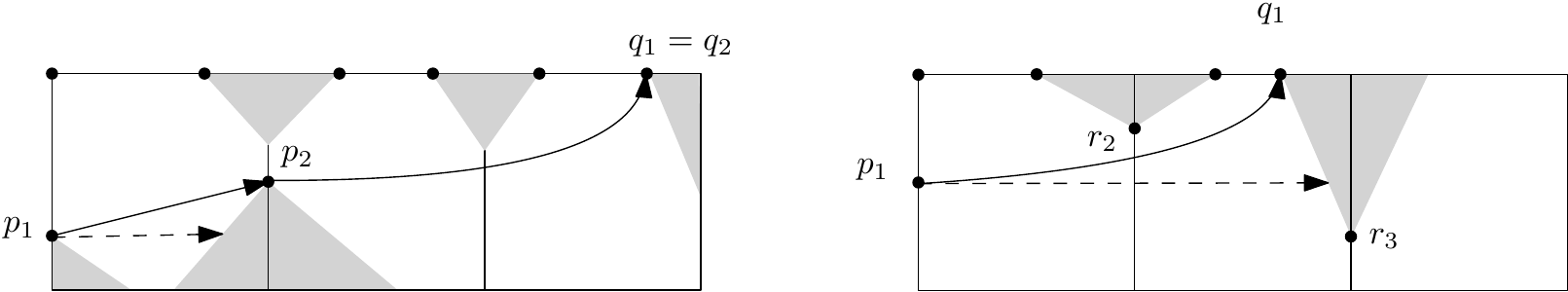}
    \caption{In the left figure, $p_1$ is blocked by $p_2$. In the right figure, $p_1$ is blocked by $r_3$.}
    \label{fig:pi_blocked}
\end{figure}

Knowing the above, given $p_i$, if we can find the first $p_j$ or $r_j$ that blocks $p_i$, we can find $q_i$. We can use a binary search tree to store the top and bottom critical points of the vertical cell boundaries based on their increasing $x$-coordinates, where critical points on the same vertical cell boundary are stored in the same leaf. In addition, each node $v$ stores the maximum and minimum $y$-coordinates of the critical points in $v$'s canonical subset. One can query such binary search tree to find the first $p_j$ or $r_j$ that blocks $p_i$ in $O(\log n)$ time. In fact, we can do the same thing for any arbitrary point $p$ in the free space as long as we know which cell $p$ reside in.

\begin{observation}
\label{observation:find_pj}
Given a set of $n_k$ consecutive non-empty cells in the same row with vertical boundary indices $\{1, ..., n_k + 1\}$, one can preprocess them in $O(n_k \log n_k)$ time such that one can find the first $p_j$ or $r_j$ that blocks $p$ in $O(\log n)$ time. 
\end{observation}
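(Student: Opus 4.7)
The plan is to build a balanced binary search tree $\mathcal{T}$ whose leaves correspond to the vertical cell boundaries in the row, keyed by their $x$-coordinate. Each leaf associated with boundary index $j$ stores both critical points $p_j$ (bottom) and $r_j$ (top). At each internal node $v$ of $\mathcal{T}$, augment the node with two precomputed values: $\text{maxBot}(v) = \max \{ y(p_j) : j \text{ in the subtree of } v\}$ and $\text{minTop}(v) = \min \{ y(r_j) : j \text{ in the subtree of } v\}$. Because the set of leaves is already known, $\mathcal{T}$ can be built in $O(n_k)$ time once the boundaries are sorted, and the augmented values can be filled in by a single bottom-up pass, giving $O(n_k \log n_k)$ preprocessing overall (with the $\log$ factor absorbed by the initial sort if the input is unsorted).

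For the query, recall that a point $p$ inside the row is blocked at the first boundary $j$ to its right such that the free vertical opening $[y(p_j), y(r_j)]$ does \emph{not} contain $y(p)$, i.e., the first $j > \text{index}(p)$ with $y(p) \leq y(p_j)$ or $y(p) \geq y(r_j)$. I will answer this with a standard leftmost-qualifier search in $\mathcal{T}$. First, descend from the root to the leaf corresponding to the cell containing $p$, collecting on the way the $O(\log n_k)$ \emph{right-hand} subtrees that lie strictly to the right of $p$; these subtrees together represent, in left-to-right order, all boundaries that $p$'s horizontal ray could potentially hit. Then, processing those subtrees from left to right, check each one using its augmented $\text{maxBot}$ and $\text{minTop}$: the subtree contains a blocker iff $\text{maxBot} \geq y(p)$ or $\text{minTop} \leq y(p)$. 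Stop at the first qualifying subtree and walk down inside it, at each internal node preferring the left child whenever its augmented values already witness a blocker, and otherwise going right. This returns the leftmost blocking boundary $j$, together with the identity of which critical point ($p_j$ or $r_j$) caused the block.

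Correctness follows because the augmented values exactly capture the two conditions that define "blocking": a blocker exists in a subtree iff some leaf's $y(p_j)$ reaches or exceeds $y(p)$, or some leaf's $y(r_j)$ drops to or below $y(p)$, which is precisely what $\text{maxBot}$ and $\text{minTop}$ record. The running time of the query is dominated by the descent to $p$'s leaf plus the at most $O(\log n_k)$ augmented-value checks on the right-hand subtrees and the final descent, all of which are $O(\log n_k) \subseteq O(\log n)$. The only mild subtlety is handling the two blocker types simultaneously with a single pass: this is resolved by the disjunctive test above, so no separate search for $p_j$-blockers and $r_j$-blockers is needed. Since the same preprocessing is valid for any query point $p$ whose cell is known (the descent to $p$'s leaf uses only the $x$-coordinate), the bounds claimed in Observation~\ref{observation:find_pj} follow.
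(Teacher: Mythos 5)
Your proof is correct and follows essentially the same approach the paper sketches: a balanced BST over the vertical cell boundaries keyed by $x$-coordinate, with each internal node augmented by subtree-wide aggregates, queried by a standard ``leftmost-qualifier'' descent over the $O(\log n_k)$ right-hand subtrees. The paper's appendix describes the data structure in a single sentence (``each node $v$ stores the maximum and minimum $y$-coordinates of the critical points in $v$'s canonical subset'') without spelling out the query; your version is a correct and somewhat more careful elaboration. In particular, your split into $\text{maxBot}(v) = \max_j y(p_j)$ and $\text{minTop}(v) = \min_j y(r_j)$ is the right way to make the disjunctive blocking test ($y(p) \leq y(p_j)$ or $y(p) \geq y(r_j)$) checkable in $O(1)$ per node, and it disambiguates the paper's looser phrasing, which taken literally (max and min over all critical points pooled) would give the wrong aggregates.
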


With the above observation, and the dynamic programming algorithm described by Gudmundsson and Wong \cite[Lemma~20]{GudmundssonWong2022Cubicupperlower}, we can preprocess the free space diagram such that given a free point~$p$, we can find the rightmost critical point on the top boundary of the same row in $O(\log n)$ time. This property is useful in Section~\ref{sec:arbitrary_ref_trajectory} when the reference trajectory is no longer vertex-to-vertex, and a monotone path can start from the interior of a non-empty cell. We combine the above insights and observation in the below Lemma~\ref{lemma:find_qi}.  

\begin{lemma}
\label{lemma:find_qi}
Given a row of $n_k$ non-empty cells in the free space diagram and a point $p$ in the free space, one can preprocess these cells in $O(n_k\log n_k)$ time, and find the right-most critical point $q$ on the top boundary such that there is a $pq$ monotone path in $O(\log n_k)$ time.
\end{lemma}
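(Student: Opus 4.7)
}

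The plan is to reuse the machinery built just before the lemma. The critical observation is that the case analysis describing how a horizontal ray from some $p_i$ is blocked (either by a bottom critical point $p_j$, a top critical point $r_j$, or by non-free space separating two cells) does not depend on the ray's starting point being a critical point; it only depends on its $y$-coordinate and on the cell it starts in. So the same case analysis will carry over to an arbitrary free point $p$.

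In the preprocessing phase I would first build the binary search tree of Observation~\ref{observation:find_pj} on the $n_k$ vertical cell boundaries of the row, storing in each internal node the maximum top-critical and minimum bottom-critical $y$-coordinate of the vertical boundaries in its canonical subset. This takes $O(n_k \log n_k)$ time. In parallel I would compute, for every bottom critical point $p_i$ on a vertical cell boundary of the row, its precomputed value $q_i$ using the procedure sketched above the lemma together with the dynamic programming of~\cite[Lemma~20]{GudmundssonWong2022Cubicupperlower}; each $q_i$ costs $O(\log n_k)$ time by Observation~\ref{observation:find_pj}, giving $O(n_k \log n_k)$ in total. I would also store the top critical points of the row in a sorted array (or balanced BST) keyed by $x$-coordinate so that predecessor/successor queries cost $O(\log n_k)$.

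For a query point $p$, I would first locate the cell containing $p$ (assumed given, or obtained in $O(\log n_k)$ by an $x$-search over the cells), then shoot a horizontal ray to the right from $p$ and ask the binary search tree of Observation~\ref{observation:find_pj} for the first vertical boundary that blocks this ray, in $O(\log n_k)$ time. Three sub-cases arise, mirroring the analysis preceding the lemma: (i) the ray is blocked inside a cell by a point $b$ lying above $r_j$ on the left boundary of the cell containing $r_j$, in which case $q$ is simply the rightmost top critical point strictly to the left of $r_j$ and strictly to the right of $p$, retrieved by one predecessor query on the sorted array of top critical points; (ii) the ray is blocked by a point $b$ lying below $p_j$, in which case the already computed value $q_j$ suffices because a monotone path $p \to p_j \to q_j$ exists by convexity of the free space inside each cell; (iii) the ray is blocked by non-free space completely separating two adjacent cells (or by the right boundary of the diagram), in which case $q$ is again retrieved by one predecessor query on the top-critical-point array. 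Each sub-case costs $O(\log n_k)$, yielding the claimed query bound.

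The main obstacle is making sure that the case analysis originally phrased for a bottom critical point $p_i$ on a vertical boundary remains correct when the source is an arbitrary free point $p$ in the interior of a cell. The key fact I would verify is that the existence of a $p q$ monotone path only requires a horizontal segment from $p$ to the relevant vertical cell boundary that stays in the free space, plus a monotone path from that boundary point to $q$; since the free space inside any cell is convex, the horizontal segment lies entirely in free space precisely when the ray from $p$ is not blocked before reaching that boundary, which is exactly what the tree of Observation~\ref{observation:find_pj} decides. Once this is established, the three sub-cases above cover all possibilities, and the combined preprocessing time $O(n_k \log n_k)$ with query time $O(\log n_k)$ follows immediately.
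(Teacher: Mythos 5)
Your proposal is essentially the same as the paper's proof: both preprocess the row with the binary search tree of Observation~\ref{observation:find_pj}, precompute every $q_i$ via the same right-to-left dynamic programming (following Gudmundsson and Wong, Lemma~20), answer a query by finding the first blocker $p_j$ or $r_j$ of the horizontal ray from $p$, split into the same sub-cases (reuse $q_j$ when blocked by a bottom critical point, otherwise a predecessor search among the top critical points, with the separated-cell case handled by restricting to maximal runs of non-empty cells), and justify correctness by convexity of the free space within each cell. No meaningful gap or deviation.
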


\begin{proof}
Given a row of non-empty cells, we use the below algorithm on each set of consecutive adjacent non-empty cells such that the boundary between two cells intersects free space. Let $x(p)$ and $y(p)$ be the $x$ and $y$-coordinate of point $p$, respectively. Let the index of the leftmost and rightmost vertical boundary be $l$ and $r$, respectively. 

Preprocess the row using Observation~\ref{observation:find_pj} in $O(n_k \log n_k)$ time. Iterate $i$ from $r$ to $l$, and for each $p_i$, use Observation~\ref{observation:find_pj} to find the leftmost $p_j$ or $r_j$ such that $y(p_i) \leq y(p_j)$ or $y(p_i) \geq y(r_j)$, respectively, in $O(\log n_k)$ time. If $p_j$ is leftmost, set $q_i = q_j$. If $r_j$ is leftmost, use binary search to find the rightmost critical point $q$ such that $x(p_i) \leq x(q) \leq x(r_j)$ in $O(\log n_k)$ time. In total, this takes $O(n_k \log n_k)$ time. Then, given a free point $p$ between the $i$th and the $(i + 1)$th cell boundaries, one can find the first $p_j$ or $r_j$ that blocks $p_i$, and repeat the above process to find the rightmost critical point $q$ such that there is a $pq$ monotone path in $O(\log n_k)$ time.

The correctness of this algorithm relies on two key facts. One, if $p_j$ is the leftmost critical point such that $p_i$ is blocked by $p_j$, then there is a $p_ip_j$ monotone path. Second, if $r_j$ is the leftmost critical point that blocks $p_i$, then there is a $pq$ monotone path as long as $q$ is between $p_i$ and $r_j$. Indeed, since the intersection of the free space and a cell is convex, if $p_i$ is not blocked by any critical point of a cell $C$, $p_i$ cannot be blocked by the interior of $C$. 
\end{proof}

\section{Proof of Lemma~\ref{lemma:cells_constant}}
\label{appendix:cells_constant}

\cellsconstant*

\begin{proof}
    The following definition follows from the definition of $f_{P, \ed}$. Let $u'$ be the first intersection of $B(u, \varepsilon d)$ and $(u, v)$ along $simpl(P, \ed)$, and let $a'$ be the first intersection of $B(a, \varepsilon d)$ and $(a, b)$ along $simpl(Q, \ed)$. Let $w$ be the first intersection of $B(u, \varepsilon d)$ and $P_{uv}$, and let $c$ be the first intersection of $B(a, \varepsilon d)$ and $Q_{ab}$. 

    Consider a partition of $(P_{uv}, Q_{ab})$ cells into four cells generated from $P_{uw}$, $P_{wv}$, $Q_{ac}$, and $Q_{cb}$. Define $W$ to be the rectangle $[0, \abse{P_{uw}}] \times [0, \abse{Q_{ac}}]$. We will show that the intersection of the simplified free space and $(P_{uw}, Q_{ac})$ cells is an ellipse clipped at $W$ by using the arguments in the proof of \cite[Lemma~30.2.1]{Har-peled2011GeometricApproximationAlgorithms}.
    
    With slight abuse of notation, we redefine $P_{uw}$ as the affine mapping from $[0, \abse{P_{uw}}]$ to points on $P_{uw}$. The function $f_{P, \ed} \circ P_{uw}$ is an affine function as the composition of two affine functions $f_{P, \ed}$ and $P_{uw}$ is also affine, and similarly, $f_{Q, \ed} \circ Q_{ab}$ is affine. Therefore $h: (x, y) \mapsto f_{P, \ed}(P_{uw}(x)) - f_{Q, \ed}(Q_{ab}(y))$ is an affine function. And assuming general position of $(u, v)$ and $(a, b)$, $h$ is also one-to-one. All the desired configurations of $(x, y)$ values satisfying $\abse{h(x, y)} \leq d$ are mapped to the disk $B$ with radius $d$ centered at the origin. 
    
    Consider the intersection of $B$ and the image of $h$, i.e., $h(\mathbb{R}^2) \cap B$, then the simplified free space of the $(P_{uw}, Q_{ac})$ cells is the set $h^{-1}(h(\mathbb{R}^2) \cap B) \cap W$. The inverse of an affine function is also an affine function, and as such so is $h^{-1}$. The affine image of a disk is an ellipse, and as such so is $h^{-1}(h(\mathbb{R}^2) \cap B)$. Therefore the $(P_{uw}, Q_{ac})$ cell is an ellipse clipped into a rectangle, and it takes constant time to compute.

    The above arguments can be extended to the other three cells. Therefore, constructing the cells defined by $P_{uw}$ and $Q_{ab}$ in $\mathcal{F}'_{(1 + \e)d}(P, Q)$ takes constant time. 
\end{proof}





\section{Handling Additional Critical Points}
\label{appendix:greedy_crit_points}
There are additional critical points to consider. With monotone path $P_i$ computed, the start of $P_{i + 1}$ should be the lowest feasible free point $p$ on $l_s$. Similarly, the finishing point of a monotone path should be the lowest feasible free point $p'$ on $l_t$. We will call these points $p$ and $p'$ the \textit{greedy critical points}, and they may not be in $G$. Gudmundsson and Wong~\cite{GudmundssonWong2022Cubicupperlower} showed that the number of greedy critical points is bounded by $O(nm)$, that is, $m$ monotone paths for each of the $O(n)$ reference trajectories.

    


We also need to add the greedy critical points in $G$. Recall that in Lemma~\ref{lemma:find_qi}, we describe an algorithm to compute, for a point $p$ in the free space, the rightmost critical point $q$ on the top boundary such that there is a $pq$ monotone path in $O(\log n_k)$ time if $p$ belongs to a group of $n_k$ consecutive non-empty cells. We have also shown that there are only $O(c/\varepsilon)$ critical points on a horizontal boundary. We can now bound the number of greedy critical points, the time it takes to insert them into $G$, and the total number of edges of $G$ with the additional greedy critical points. 


\begin{lemma}
\label{lemma:number_of_greedy_crit_points}
In the simplified free space diagram $\mathcal{F}'_{(1 + \ehat)d}(T, T)$, we generate at most $O(nm)$ greedy critical points. It takes $O(nm\log n)$ time to add them in the graph $G$, and $G$ has at most $O(nm\log (c/\e))$ edges.
\end{lemma}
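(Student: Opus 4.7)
The plan is to bound the three quantities in order: the count of greedy critical points, the total time to insert them into $G$, and the total edge count of $G$ after their insertion.

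First, to count the greedy critical points, I would invoke the analysis of Gudmundsson and Wong~\cite{GudmundssonWong2022Cubicupperlower}: in the vertex-to-vertex setting there are $O(n)$ reference-trajectory instances (one per relevant sweepline event), and within each instance the greedy plane sweep produces at most $m-1$ monotone paths, each contributing at most two greedy critical points (the lowest feasible starting point on $l_s$ and the lowest feasible ending point on $l_t$). This yields $O(nm)$ greedy critical points in total.

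Second, to bound the insertion cost per greedy critical point $p$, I would combine Lemma~\ref{lemma:strip_limit} with Lemma~\ref{lemma:find_qi}. Since each row of the simplified free space diagram is indexed by a segment of the $\varepsilon d$-simplification $T'$ and is therefore at least $\varepsilon d$ tall, Lemma~\ref{lemma:strip_limit} gives $n_k \in O(c/\varepsilon)$ non-empty aggregated cells in the row containing $p$. Applying Lemma~\ref{lemma:find_qi} locates the rightmost critical point $q$ reachable from $p$ via a basic monotone path in $O(\log(c/\varepsilon))$ time, and the range-tree decomposition of the interval $[p,q]$ on the top boundary produces $O(\log(c/\varepsilon))$ edges that can be written into $G$ in $O(\log(c/\varepsilon))$ time. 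Locating which aggregated cell contains $p$ takes $O(\log n)$ time using a standard point-location structure on the non-empty cells constructed in Section~\ref{sec:free_space_diagram}. Dominating by the cell-location cost gives an $O(\log n)$ bound per greedy critical point, for a total of $O(nm\log n)$ time across all $O(nm)$ of them.

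Third, to bound the edge count, I would sum the $O(\log(c/\varepsilon))$ edges contributed by each greedy critical point, giving $O(nm\log(c/\varepsilon))$ additional edges. The key observation, again coming from Lemma~\ref{lemma:strip_limit}, is that the number of critical points on the top boundary of the row containing $p$ is $O(c/\varepsilon)$, so the range tree over that row has only $O(c/\varepsilon)$ leaves and hence depth $O(\log(c/\varepsilon))$, independent of $n$. The main obstacle in the argument is precisely this reduction from $\log n$ to $\log(c/\varepsilon)$ on the edge count: without the $c$-packedness bound on cells per row, each greedy critical point would produce $O(\log n)$ edges, and the final edge count would degrade to $O(nm\log n)$, in turn worsening the final running time in Theorem~\ref{thm:vertex_to_vertex}. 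Once this per-row bound is in hand, the remaining bookkeeping is routine.
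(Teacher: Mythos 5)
Your proposal is correct and follows essentially the same approach as the paper: cite the $O(nm)$ bound on greedy critical points from Gudmundsson and Wong, use Lemma~\ref{lemma:find_qi} (with the $O(\log n)$ locate cost) to find the rightmost reachable critical point $q$ per insertion, and use Lemma~\ref{lemma:strip_limit} to bound the range tree to $O(c/\varepsilon)$ leaves so that each greedy critical point contributes only $O(\log(c/\varepsilon))$ edges. The extra sentence you add about a point-location structure is a reasonable gloss on what the paper folds into its ``$O(\log n)$ amortised'' bound, but it does not change the argument.
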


\begin{proof}
There are at most $O(nm)$ greedy critical points \cite{GudmundssonWong2022Cubicupperlower}. For each greedy critical point $p$, it takes $O(\log n)$ amortised time to find the rightmost critical point $q$ such that there is a $pq$ basic monotone path, and it takes $O(\log(c/\varepsilon))$ time to range query the $O(\log(c/\varepsilon))$ nodes to connect $p$ to in the range tree. In total, it takes $O(nm\log n)$ time to insert the greedy critical points in $G$. Each $p$ connects to at most $O(\log{(c/\varepsilon)})$ nodes, therefore we add at most $O(nm \log(c/\varepsilon))$ edges. The graph $G$ has $O(nm\log(c/\varepsilon))$ edges in total.
\end{proof}

\section{Proof of Lemma~\ref{lemma:arbitrary_reference_no_interval_management}}
\label{appendix:arbitrary_reference_no_interval_management}

\arbitraryRefNoIntervalManagement*

\begin{proof}
Construction of the simplified free space diagram takes $O(cn/\varepsilon + cn/\varepsilon^3)$ by Theorem~\ref{thm:free_space_summarised}. Construction of $G$ takes $O((cn/\varepsilon)\log{(n/\varepsilon)})$ time by Lemma~\ref{lemma:constructing_G}. The total number of critical points is upperbounded by the number of propagated critical points, which is $O(c^2 n/\varepsilon^2)$. The total number of greedy critical point is $O((c^2n/\varepsilon^2) \cdot m)$. Sorting the critical points takes $O((c^2mn/\e^2)\log(n/\varepsilon))$ time. For every critical point $p$, compute the rightmost point $q$ such that there is a $pq$ monotone path takes  $O((c^2nm/\varepsilon^2) \log n)$ time by Lemma~\ref{lemma:find_qi}. The graph $G$ has at most $O((c^2 mn/\varepsilon^2)\log(c/\varepsilon))$ edges after adding the greedy critical points, and the internal critical points. Gudmundsson and Wong~\cite{GudmundssonWong2022Cubicupperlower} showed that an edge is added to and removed from the link-cut tree at most once, and adding/removing an edge from the link-cut tree takes $O(\log(n/\varepsilon))$ time since the maximum number of nodes in the link-cut tree is upperbounded by the number of nodes in $G$. Therefore maintaining the link-cut tree takes $O((c^2 mn/\varepsilon^2)\log(c/\varepsilon)\log(n/\varepsilon))$ time.
\end{proof}

\section{Infinite $l$-apart Critical Points} \label{appendix:l_apart}

One can detect that two cells generate infinite $l$-apart critical points by comparing the quadratic equations of their interiors (see Figure~\ref{fig:infinite_l_apart}). To cope with this case, one can move the end point of one of the segment by a miniscule amount, say $\delta = \e^2 d$.

\begin{figure}[bth]
    \centering
    \includegraphics[scale=0.8]{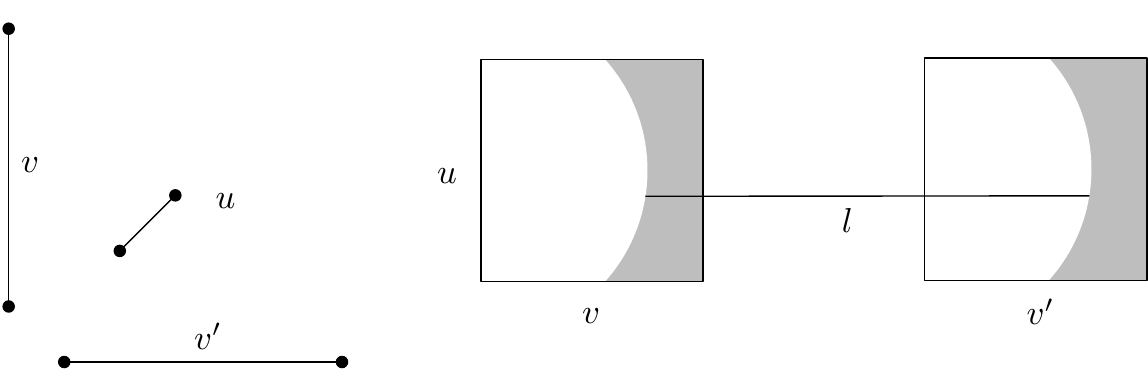}
    \caption{The placement of segments $u$, $v$, and $v'$ and their respective $(u, v)$ and $(u, v')$ cells. There are an infinite number of $l$-apart critical points when the boundaries of the free space in two cells can be expressed with the same equation, and they are exactly $l$-apart.}
    \label{fig:infinite_l_apart}
\end{figure}
\end{document}